\newcommand{\I}{\ensuremath{\mathcal{I}}\xspace} 
\newcommand{\ca}{\ensuremath{\mathtt{CA}}\xspace}
\newcommand{\statcons}{\ensuremath{\mathtt{Consensus}_{Stat}}\xspace}
\newcommand{\PS}{\ensuremath{\mathcal{P}}\xspace}
\newcommand{\party}{\ensuremath{p}\xspace}
\newcommand{\wc}{\prot{WeakConsensus}}%
\newcommand{\gc}{\ca\xspace}%
\newcommand{\psetp}[2]{\ensuremath{P_{{#1}}^{_{(#2)}}}\xspace}%
\newcommand{\nv}{{\mbox{\ensuremath{\text{``{\sf\small n/v}''}}}}\xspace}%
\newcommand{\pa}[1]{#1}%
\newcommand{\prot}[1]{\ensuremath{\mathsf{#1}}\xspace}%
\newenvironment{protocol}[2][Protocol]{%
   \setlength{\tabcolsep}{.3em}
  \begin{center}
    \begin{tabular}{|l|}\hline
      \begin{minipage}{.975\linewidth}
       \vspace{.5ex}\subsubsection*{#1 #2} \begin{enumerate}\setlength{\itemsep}{0mm} \setlength{\parskip}{3pt}
  \setlength{\parsep}{0pt}
\vspace{-.5ex}
        }{%
        \end{enumerate}\vspace{-1.2ex}
      \end{minipage}\\ 
      \hline
    \end{tabular}
  \end{center}
}
\newenvironment{fprotocol}[2][Protocol]{%
\begin{figure}[ht!]\normalsize
\setlength{\tabcolsep}{.3em}
\hspace{.16ex}\begin{tabular}{|l|}\hline
      \begin{minipage}{.975\linewidth}\vspace{0.5ex}
        \subsubsection*{#1 #2} \begin{enumerate}\setlength{\itemsep}{0mm} \setlength{\parskip}{2pt}
  \setlength{\parsep}{0pt}
\vspace{-.5ex}
        }{%
        \end{enumerate}
        \vspace{-1.2ex} 
      \end{minipage}\\
      \hline
    \end{tabular}\vspace{-2ex}
\end{figure}
}
\newcommand{\cancel}[1]{}
\newtheorem{remark}{Remark}
\newtheorem{theorem}{Theorem}
\newtheorem{corollary}[theorem]{Corollary}
\newtheorem{definition}{Definition}
\newcounter{linenumber}
\def\R{\ensuremath{\mathcal{R}}}
\def\I{\ensuremath{\mathcal{I}}}
\def\O{\ensuremath{\mathcal{O}}}
\newcommand{\remove}[1]{}
\def\O {\mathcal{O}}
\def\I {\mathcal{I}}
\title{Synchrony/Asynchrony vs. Stationary/Mobile? The Latter is Superior...in Theory. }
\author{Eli Gafni}
\affiliation{\institution{UCLA}}
\email{eli@ucla.edu}
\author{Vasileios Zikas}
\affiliation{\institution{Purdue University}}
\email{vzikas@purdue.edu}
\date{}
\begin{document}

\begin{abstract}
Like Asynchrony, \emph{Mobility} of faults precludes consensus. Yet, a model $M$ in which Consensus
is solvable, has an analogue relaxed model in which Consensus is not solvable and for which we can ask, whether Consensus is solvable if the system initially behaves like the relaxed analogue model, but eventually morphs into $M$. We consider two relaxed analogues of $M$.
The first is the traditional Asynchronous model, and the second to be defined, the Mobile analogue. While for some $M$ we show that Consensus is not solvable in the Asynchronous analogue, it is solvable in all the Mobile analogues. Hence, from this perspective Mobility is superior to Asynchrony.  

The pie in the sky relationship we envision is: Consensus is solvable in $M$, if and only if binary Commit-Adopt is solvable in the mobile analogue. 

The ``only if'' is easy. Here we show case by case that the ``if'' holds for all the common faults types.

\end{abstract}

\maketitle

\begin{acks}
This version was submitted to PODC 2020.
\end{acks}
\section{Introduction}

\subsection{Mobility and Message Adversary}

The notion of {\emph{indulgence}, a term usually frowned upon, has found merit in distributed computing as an adjective for an algorithm that solves a task in an environment which is initially asynchronous but eventually behaves synchronously \cite{PODC:Guerraoui00}. An algorithm indulgences the asynchronous period
in the sense of preserving safety. It becomes live in the times of synchrony. Obviously this notion can be extended to a task $T$, by saying that $T$ is indulgent if there exist an indulgent algorithm to solve $T$. More precisely, a task $T$ is indulgent if when $T$ is solvable in any  model with certain type and number of faults, it is also solvable eventually. 

Thus, indulgence takes a task and all the models in which the task is solvable. For each model we assume a relaxed model is defined. Then the task is indulgent if it is solvable 
 when the system starts in the respective relaxed model, but eventually behaves as the (unrelaxed) model.



Thus indulgence has there parameters: The Task $T$, the fault type, and the pairs of unrelaxed/relaxed models.

Here we consider the task to be Consensus, and the pairs to be Synchronous/Asynchronous for the common fault types. 

This paper's motivation is the displeasing observation that for two fault types consensus is not indulgent: 
\begin{enumerate}
    \item Send-Omission Faults: Consensus in the synchronous case is achievable for $t<n$, where $t$ is the number of faults and $n$ is the number of processors. In contrast, it is a folklore that at the time of asynchrony no safety can be maintained for $t \geq n/2$ as network partition occurs.
    \item Authenticated Byzantine: Consensus in the synchronous case is achievable for $t<n/2$~\cite{DolStr83,Fit02}. In contrast, at the time of asynchrony no safety can be maintained for $t \geq n/3$~\cite{PODC:DLS84}.
\end{enumerate}

To avoid displeasure, we investigate replacing the Synchrony/Asynchrony pair with the Stationary/Mobile pair. The system is synchronous, each processor gets a signal of an end of a round after which no message is in transit to it. The misbehavior is when the faults are mobile,
and the desired behavior is when the faults become stationary. If the number of fault is $t$, in the stationary case only fixed $t$ processors can experience faults. In the mobile case, each round $t$ different processors may exhibit faulty behaviour.

Usually, faults are attributed to processors. Here we restrict ourselves to an adversary which attacks a processor by controlling their messages sending interface. Thus a 1-omission resilient mobile system will be a \emph{synchronus} system where all send to all and in each round an adversary chooses a processor and can drop any number of the messages the processor sends (cf. \cite{DBLP:journals/ipl/JayantiCT99}). In the Byzantine case, the adversary can not only drop messages but also tamper with them. And finally, we attend to the Authenticated Byzantine, which posed the foremost technical difficulty of defining the notion of mobility, and proving the desired result for it.

When defining authentication, one often thinks of its cryptographic instantiation by means of digital signature. But authentication has an English-Dictionary definition independent of signatures.
It is about verification that a claim holds. the verification of a claim that a painting is by Picasso, or that a fossil is from the Paleolithic Period, or that a Diary was written by Hitler, is called Authentication . Thus in this paper Authenticated Byzantine is an abstract assumptions on which claims can be verified (and consequently not forged) and which cannot. Obviously in our case it will be about ``if processor $p_i$ claimed it received a message $m$ at round $j$ from $p_k$'' can the receiver of such a claim can verify verify whether the claim is true or not. Thus, it constraints the adversary, to make only claims that cannot be verified as false. Consequently, forgetting the means of verification, we consider authentication set of pairs $(p_i,j)$ where $p_i$ is a processor and $j$ is round number.
The messages sent by $p_i$ at round $j$ can be forged while message generated by pairs not in the set cannot be forged.

Our results in the Authenticated Byzantine is for this abstract constrains. To our knowledge, Authenticated Byzantine was never  defined for the mobile case at this clean level of abstraction. A challenge this paper poses is to find an implementation, e.g., by using cryptography under appropriate assumptions, of the functionality of our abstract definition. Nevertheless, proof can still be done on the abstract functionality level. 

Our first encounter with idea of mobility was through the beautiful observation of Santoro and Widmayer \cite{STACS:SW89} that the FLP proof of consensus impossibility translates verbatim to the mobile setting, and afterwards for privacy-violating corruptions in the context of  \emph{proactive-security}~\cite{PODC:OstYun91}. 

It isn't clear what Santoro and Widmayer had in mind as the cause of faults when they talked about omission faults. Why did the omission faults occur?
There can be two views for that. One is that the processor misbehaved, the other is that an adversary intercepted messages sent, and dropped them. Which view one chooses, conceptually makes a big difference. In the Byzantine failures case the generalization of the former is that a virus got control of the processor. The generalization of the latter is that a deamon only tampered with messages. 

Theoretically speaking, the latter view of a deamon, rather than a virus---which in fact gave rise to the notion of \emph{message-adversary} (MAd) by Afek and Gafni \cite{}---is more pleasing: In this approach a message adversary unified many seemingly unrelated notions. For instance, MAd when generalized here to the Byzantine, has processors being always innocent and good. Thus we avoid questions like ``are corrupt processors required to output and, if so, what should their output be?'' or specifying a problem in terms of ``correct'' and ``incorrect'' which are notions associated with an execution, defeating the presentation of task as solely a mathematical relation, independent of the environment in which it is to be solved. The analogue of a function in centralized computing. 

Alas, thinking of the adversary, especially, in the Byzantine case as being a Message Adversary may not sit well with reality. This paper isn't about reality, it is about whether we can get the mathematics to be nicer.

This paper is in the mold of \emph{set consensus} \cite{DBLP:conf/podc/Chaudhuri90}. There is no \emph{killer application} for set consensus. There might never be. Nevertheless, the mathematics say that for distributed computing there is no preference to consensus over set consensus. Similarly here. Yet, in fact, some solutions for proactive security move in our theoretical direction of MAd \cite{PODC:OstYun91}.

\subsection{Consensus vs. Commit-Adopt}

Commit-Adopt (CA) \cite{DBLP:conf/podc/Gafni98} is a task, relaxing consensus. It requires ``liveness'' only when all processors start with the same value. Otherwise it should be safe, if one processor outputs, all adopt that output value to be their new input value. 

This paper brings to the fore the informal conjecture that Consensus is to stationarity what CA is to mobility. One could have seen a glimpse of this in the theoretically tainted pair Synchrony/Asynchrony. In the Byzantine case it is Consensus vs. Realiable Broadcast \cite{DBLP:journals/iandc/Bracha87}.
And what is Reliable Broadcast if not shared-memory (appropriately defined) and consequently CA (after some thought).The domain of models Synchronous/Asynchronous for which the pair Consensus/Reliable Broadcast are solvable, coincide i.e. $<n/3$.

There are two direction to the above conjecture. The easy direction is that for any mobile setting in which binary CA is solvable, then consensus is solvable in the stationary analogue. This follows from the construction in \cite{FOCS:BerGarPer89,FitMau98} of a consensus algorithm effectively made out of CAs. The reverse direction is the heart of the conjecture. It for instance, should imply: Take any Consensus algorithm for the stationary setting, if executed in the analogue mobile setting and all processors start with the same bit $b$, then they should all output $b$. We were not as yet successful in proving this.

Rather, we went through case by case and showed the following: For all common fault types, the  number of faults Consensus can tolerate in the Stationary case is the number of faults  CA can tolerate in the analogue mobile case.

\section{Model}

\subsection{Consensus as a Task}
A \emph{task} on $n$ processors is defined as a binary relation $R\subseteq\I^n\times\O^n$  between $n$ tuples where entries for the first tuple come from an \emph{input space} $\I$ and the second tuple, the outputs come from an {\em output space} $\O$. The interpretation of each $(x_1,\ldots,x_n)\in \I^n$ (resp. $(y_1,\ldots,y_n)\in\O^n$) is that  $x_i$ (resp. $y_i$) is the input (resp. output) value of $p_i$.

For the  consensus task, the corresponding relation, $R_C$, satisfies the following, where every processor $\party_i$ has input $v_i$: 
\begin{enumerate}
    \item If a processor outputs $v$ then some processor must have  input $v$
    \item If a processor outputs $v$ then no processor outputs $v'\neq v$. 
\end{enumerate}

In this work we restrict our attention to {\em binary consensus}---i.e., the inputs and output are bits.

\cancel{
Defined as a task, in binary consensus the input tuples are all the possible $2^n$ binary vectors of size $n$, i.e., $\I=\{0,1\}$.
There are only two output tuples, the all-$0$ tuple and the all-$1$ tuple. For notational simplicity we will denote the all-$b$ tuple by $b^n$.
The all-$b$'s input tuple $b^n$ relates only to the output tuple $b^n$, respectively. All other tuples relate to both possible output tuples.

\begin{definition}[Consensus]
Binary consensus as a task is defined by the following binary relation $R_C\subseteq\{0,1\}^n\times\{0,1\}^n$: 

$$\R_C := \{(0^n,0^n),(1^n,1^n)\}\cup\{(x,y) \text{ s.t } x\in\{0,1\}^n\setminus\{0^n,1^n\}\} \text{ and } y\in\{0^n,1^n\}\}  $$

\end{definition}}

Note that the above definition of consensus is different from the traditional consensus definition in the Byzantine setting~\cite{PSL80,LSP82}, which does not consider the inputs and gives no guarantee on the outputs of corrupted/Byzantine parties. 

An {\em interactive processor} is a processor that in addition to local computation, has the ability to communicate with other processors by sending them messages over some network. A {\em protocol} among $n$ processors is a collection of $n$ interactive processor specifications.
\footnote{We restrict our attention to non-reactive tasks where processors receive only one input at the beginning and produce a single local output (see below). This type of protocols is sufficient for our results; however, one can extend this definition to reactive tasks.}  

A {\em model} $M$ consists of two components, the {\em  system model} and the {\em communication model}. The system model specifies the types and capabilities of the processors, along with the properties of the communication between them. 
The {\em failure (aka adversary) model} specifies the types, e.g., fail-stop, omission, byzantine, etc, and combination, e.g., up to $t$ faults, of different faults that the processors might endure. 

A task $T$ with corresponding relation $R_T$ is {\em solvable} in a model $M$ if there exists a protocol in $M$ such that if processors start with some input tuple, they all output from an output tuple relating, through $R_T$ , to the input tuple.

\begin{definition} 
An $n$-processor task $T$ described by relation $\R_T\subseteq \I^n\times\O^n$ is {\em solvable in a model $M$} if there exists a protocol $\Pi$ in $M$ such that $\forall (x_1,\ldots,x_n)\in\I^n$, if every processor $\party_i$ runs (his code of) $\Pi$ on input $x_i$, then every processor outputs $y_i\in\O$ such that $\left((x_1,\ldots,x_n),(y_1,\ldots,y_n)\right)\in \R_T$.
\end{definition}

\paragraph{Commit-Adopt}

In addition to consensus we will use Commit-Adopt (CA) task~\cite{DBLP:conf/podc/Gafni98}. CA   (also referred to as {\em graded consensus}~\cite{TCC:ZikHauMau09})  is similar in spirit to gradecast~\cite{STOC:FelMic88,DBLP:conf/wdag/Ben-OrDH10,PODC:Aspnes10}. 


Its specification is as follows:

\noindent Every processor holds as input a  value $v_i$.  Every  processor $\party_j$ outputs a value which is either  $commit(v_j)$ or $adopt(v_j)$ for some value $v_j$, which equals the input of some processor.  
\begin{enumerate}
    \item If all processors start with the same value $v_i=v$, then they all output $commit(v)$,
    \item If a processor outputs $commit(v)$, then  all processors output either $commit(v)$, or $adopt(v)$. 
\end{enumerate}

In this work we focus our attention to {\em binary CA} where the input and output values $v$ are bits. 

\cancel{
\begin{definition}[Commit-Adopt~\cite{\cite{DBLP:conf/podc/Gafni98}}]
Binary commit-adopt as a task is defined by the following  relation $R_C\subseteq\{0,1\}^n\times\{commit(0),adopt(0),commit(1),adopt(1)\}^n$:

\[
\begin{split}\R_C :=& \biggl\{(0^n,commit(0)^n),(1^n,commit(1)^n)\biggr\}\cup\\
&\biggl\{(x,y) \text{ s.t } x\in\{0,1\}^n\setminus\{0^n,1^n\}\\
&\text{ and } y\in\{(y_1,\ldots,y_n)\in\{commit(0),adopt(0),commit(1),adopt(1)\}^n,
\\
&\text{ s.t. if for some } (i,b)\in[n]\times\{0,1\}\  y_i=commit(b) \text{ then } \forall j\in[n]\  y_j\in\{commit(b),adopt(b)\}\}\biggr\}
\end{split} 
\]
\end{definition}}

In the remainder of this section, we describe the model(s) under consideration. All models considered here share the same system-model   component, but are for different adversary models. 

\subsection{The System model}
We assume $n$ interactive processors $\PS=\{\party_1,\party_2, \ldots , \party_n\}$, also referred to as {\em parties}, which can be interactive computing machines, e.g., Interactive Turing Machines (ITM). As usual we will assume that the processors can perform polynomially long (in $n$) computations (and can communicate polynomially long messages), although our negative results even hold for unbounded processors.  
The processors are 
connected by a complete reliable communication-network with a dedicated channel between any two processors, a la LSP\cite{LSP82,PSL80}. Here, reliable means that if $\party_j$ receives a message from $\party_i$ (on the dedicated channel connecting the two processors) then $\party_j$ knows that this message was indeed sent by $\party_i$, or by the adversary on behalf of $\party_i$---if $\party_i$ was corrupted when the message was sent. We assume that the protocol (and communication) is {\em synchronous}. In particular, \ 
the protocol proceed in rounds, where in  each round all processors send a (potentially different) message to all other processors. All messages sent in any round $\rho$ are delivered by the beginning of round $\rho+1$.

\subsection{The (MAd-)Adversary Model}
Recall that our definition of a protocol solving a task requires even corrupted parties/processors to output a value and we give them the same output guarantees (e.g., agreement in the context of consensus) as we give to honest/uncorrupted parties. This makes our corruption model more suitable to capture a mobile adversary who in the course of the protocol might corrupt every processor. 

To be able to achieve such stronger guarantees we consider adversaries that operate at ``the network interface" (e.g., communication tapes) of the parties, rather than corrupting the parties' internal state. This allows us to define producing an output as writing it on a special write-only and append-only output  tape which is out of bounds for the adversary.  Here is how our adversary is defined.

We consider a central adversary who might affect  messages sent by  parties it corrupts; we refer to such an adversary as a {\em message adversary}, in short {\em MAd} (adversary). We note in passing that the notion of \emph{corrupted processor} at a round is just for descriptive means of delineating the power of the MAd Adversary. More concretely, a MAd adversary might intercept the {\em outgoing messages}, rather than the internal state, of processors. In a nutshell, in every round, each corrupted party prepares its messages for the current round, according to the messages received from previous rounds, by following his protocol instructions; depending on the privacy assumption on the underlying communication model, a MAd adversary   can tamper with theses messages.


To make the strongest possible statements, here we will consider the full information model~\cite{DBLP:journals/jacm/FredericksonL87} whereby the adversary gets to see all messages exchanged in the protocol. We will assume that, subject to its constraints, the adversary is non-deterministic, and can produce either garbage or the set of messages that will be most detrimental to an operation of an algorithm. So rather than describing the adversary as  an algorithm operating in its history, we will describe simply which messages {\em cannot} be non-deterministically produced given a corruption pattern.   As an example, unforgeability of signatures for keys inaccessible to the adversary, e.g., keys of an uncorrupted party $\party_i$, would correspond to restricting the messages that the adversary might be able to inject to the protocol on behalf of $\party_i$ to those that are actually generated by $\party_i$. 
%
%
%

The adversary is described by means of when parties are corrupted---{\em stationary, mobile}, or {\em  eventually stationary (mobile)}---and by the corruption types---{\em omission, Byzantine,} and {\em Authenticated Byzantine}, as discussed  below.  

\paragraph{The stationary MAd adversary}
A stationary $t$-MAd Adversary is an adversary that can corrupt at most $t$ processor throughout an execution of a protocol.\footnote{This includes both static and adaptive corruptions as defined in the cryptographic literature.} 

\paragraph{The mobile MAd adversary} The {\em mobile} $t$-MAd adversary is restricted to corrupting at most $t$ processors in a \emph{round}. Thus over time all processor may experience message tampering albeit in different round. 


\paragraph{The eventually-stationary (mobile) MAd adversary} This is an adversary that for a finite (but unknown to the protocol) number $\rho$ of rounds behaves as a mobile adversary, but from round $\rho+1$ on becomes stationary.
More concretely, an {\em eventually-stationary  $t$-MAd adversary} is an adversary that plays a mobile $t$-MAd adversary strategy for a finite number of rounds and then chooses and from some point on confines its corruption to a fixed set of at most $t$ processors. 

\subsubsection{MAd-Corruption Types considered in the paper}
\cancel{\paragraph{Fail-Stop Corruption}
A processor $p_i$ is fail-stop corrupted if at some round $j$ the adversary removed some of its messages, and for all subsequent round $k$, $k>j$ the adversary removes all the messages it sends. Obviously, fail-stop has no mobile version.}
\paragraph{Send Omission failure}
The adversary is restricted to just removing messages but it is not constrained to remove all messages of a processor in rounds after it removed some. 
\paragraph{Byzantine failure}
The adversary can tamper with messages replacing them with any of its own choosing.
\paragraph{Authenticated Byzantine failure}
Like Byzantine only that if at round $j$ processor $p_i$ was not corrupted, then at subsequent rounds $k>j$ no processor can claim messages sent by $p_i$ at round $j$ anything that has not really been sent, aside from just pretending a message was not received. More formally, in the synchronous authenticated Byzantine setting we will assume wlog that any message sent in a protocol from 
$\party_i$ in round $\rho$ has the formal $(\party_i,\rho,m)$ where $m\in\{0,1\}^*$ is the contents of the message and $\party_i$ and $\rho$ are associated metadata. The authenticated byzantine adversary model mandates then that if for any message $(\party,\rho,m)$, where $p\in\PS$,
and for any party $\party'\in\PS$, $(\party',\rho',m')$ is  (an encoding of) a substring of  $m$, where $\rho'\leq\rho$, then either $\party'$ was corrupted in round $\rho'$ or he was uncorrected and sent $(\party',\rho',m')$.

\begin{remark}[Not giving up corrupted parties---even corrupted parties produce outputs]
One might consider a natural MAd-analogue of standard Byzantine and omission corruption to tamper with both incoming and outgoing communication. However, defining things this way leads to complications with how broadcast and consensus are defined. In particular, the traditional cryptographic definitions {\em give up} corrupted parties, i.e.,  give no guarantees about the output of corrupted parties. This means that not only these tasks cannot be defined as simple functions taking only processor's inputs into account but have to also consider the adversary, but also, the definition might give up a party that performs all its operations correctly, just because he is "stained" as being corrupted. Instead, here we we want to consider feasibility for the task-based natural definition of consensus discussed above, where corrupted processors are not discriminated. Clearly if the adversary can tamper or block incoming communication, it is impossible to avoid corrupted parties from outputting no output (e.g., $\perp$) or even a wrong output depending on the setup and the adversary's capabilities. For this reason, we restrict a MAd-adversary to only tamper outgoing communication. Note that this adversary might still inject messages as outgoing messages of corrupted parties.   
\end{remark}

\section{Cross-Model Reductions}

Let $M$ be any of the above models in which in the stationary case Consensus is
solvable, and in the mobile case CA is solvable.

As a simple consequence of \cite{PODC:Guerraoui00} we obtain an indulgent protocol for consensus: Take any protocol $\Pi_C$ that solve consensus in $M$ and any protocol $\Pi_{CA}$ that solves CA in the mobile analogue of $M$,  and run them alternately. A processor outputs when it commits in $\Pi_{CA}$. Nevertheless, notice that processors in our models work forever (as usually Consensus is to implement a ledger). Hence we have no notion of \emph{halt}. 

But indulgence is a motivating side show. We want to show that Consensus and CA are twins, always solvable for the same $M$ with only stationery and mobility, respectively.

One direction follows from the beginning of Section~\ref{sec:stat} below (Theorems~\ref{theorem:fitzi} and Corollary~\ref{corollary:fitzi}) and the fact that if a task is solvable in the mobile model it is also solvable in the stationary.


\cancel{\cite{}. There Garay and .... exhibit a generic template of obtaining a stationary consensus algorithm when CA is solvable. The idea is simple. Iterate the following Phase:
Start with CA. Do Rotating Coordinator among all processors interspersed with CA. The next Coordinator takes
its input from what it got in the preceding CA and send its value to all to use in a CA. Processors which committed at the previous CA keep the value they committed to as an input, the rest take their input from the Coordinator. A processor outputs, if it committed a value in the last CA in the phase.

To see why it works notice that if all started with same value, in first CA all processor will commit that value, and stop listening to the Coordinator. Otherwise, we just need to worry about agreement.
In the stationary case eventually a Coordinator whose messages are not corrupted will be encountered. If any processor has committed in the preceding CA will not listen to the Coordinator in the next round, it follows that the Coordinator value must be the value that was committed as
it was either committed or adopted by all. Thus, past the Coordinator that is not corrupted all start with the same value, ignoring the Coordinators there after.}

Thus we get a theorem:
\begin{theorem}
If CA is solvable in mobile model $M$, then Consensus is solvable in stationary $M$ .
\end{theorem}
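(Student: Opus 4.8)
The plan is to realize the CA-based consensus compiler of \cite{FOCS:BerGarPer89,FitMau98} and to check that its correctness survives a \emph{stationary} $t$-MAd adversary. The first move is a free reduction: a stationary adversary is a special case of a mobile one (it simply never changes its corruption set), so a protocol $\Pi_{CA}$ solving CA against a mobile $t$-MAd adversary in $M$ also solves CA against a stationary $t$-MAd adversary in $M$. Hence I may use $\Pi_{CA}$ as a black box in the stationary model, each invocation guaranteeing the two CA properties for \emph{all} processors (recall that in our model even corrupted processors must output and are covered by the task guarantees).

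The consensus protocol will consist of an initial invocation of $\Pi_{CA}$ followed by a single round-robin sweep of coordinators $p_1,\dots,p_n$, where each coordinator step is immediately followed by another invocation of $\Pi_{CA}$. Every processor maintains a current bit, initialized to its input. It first runs $\Pi_{CA}$; in coordinator step $\ell$, coordinator $p_\ell$ broadcasts its current bit, a processor that obtained $commit(\cdot)$ in the immediately preceding $\Pi_{CA}$ keeps its committed bit while every other processor adopts the bit received from $p_\ell$, and then all processors run $\Pi_{CA}$ again on the resulting bit. A processor writes its decision on its output tape when it obtains $commit$ in the final $\Pi_{CA}$ of the sweep; since processors never halt, it then keeps this bit fixed thereafter. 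Validity is then immediate from the first CA property: if all processors hold the same input $v$, the initial $\Pi_{CA}$ returns $commit(v)$ to everyone, so every processor becomes a keeper, ignores all coordinators, retains $v$ through every subsequent $\Pi_{CA}$, and decides $v$.

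For agreement and termination I use stationarity. Since at most $t<n$ processors are corrupted and the set is fixed, the sweep $p_1,\dots,p_n$ contains at least one uncorrupted coordinator $p_{\ell^*}$. The key lemma is that after step $\ell^*$ every processor holds one common bit $u$. Indeed, being uncorrupted, $p_{\ell^*}$ delivers the \emph{same} bit $u$ to all (its outgoing messages are untampered), so every adopter sets $u$; and if any processor kept a committed bit $w$ from the $\Pi_{CA}$ preceding step $\ell^*$, then by the second CA property \emph{every} processor---in particular $p_{\ell^*}$ itself---output value $w$ from that $\Pi_{CA}$, whence $u=w$ and the keepers do not conflict with the adopters (and if there are no keepers, all simply adopt $u$). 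Thus all processors enter the next $\Pi_{CA}$ with the common bit $u$; by the first CA property they all obtain $commit(u)$, become keepers, and so retain $u$ and re-commit it through the final $\Pi_{CA}$, all deciding the same value $u$.

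The main obstacle, and the one point needing care, is exactly this lemma: one must ensure that the \emph{corrupted} coordinators preceding $p_{\ell^*}$ cannot leave the system in a state where a kept committed bit clashes with the uncorrupted coordinator's broadcast. This is precisely what the second CA property rules out, by forcing every pre-$\ell^*$ committed value to coincide with the value held, and hence broadcast, by $p_{\ell^*}$. The argument is the stationary specialization of the phase-king template of \cite{FOCS:BerGarPer89,FitMau98}: a single sweep suffices in place of $t+1$ phases because stationarity guarantees an uncorrupted coordinator within one sweep, and the ``keep-if-committed'' rule together with CA's commit guarantee carries the established value safely past all corrupted coordinators.
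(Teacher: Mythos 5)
Your proof is correct and follows essentially the same route as the paper: first observe that a stationary $t$-MAd adversary is a special case of a mobile one, so the mobile CA protocol already solves CA in stationary $M$, and then apply the rotating-coordinator/phase-king compiler of \cite{FOCS:BerGarPer89,FitMau98} (the protocol \statcons of Theorem~\ref{theorem:fitzi} and Corollary~\ref{corollary:fitzi}), with the same key observation that stationarity guarantees an uncorrupted coordinator in one sweep and that CA's commit property carries a committed value safely past corrupted coordinators.
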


By applying~\cite{PODC:Guerraoui00} to the above theorem, we get the following corollary.

\begin{corollary}
If CA is solvable in mobile model $M$, then Consensus is indulgent in $M$.
\end{corollary}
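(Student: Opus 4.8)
The plan is to assemble the two building blocks that the indulgence transformation of \cite{PODC:Guerraoui00} consumes and then invoke that transformation essentially as a black box. By hypothesis we already possess a protocol $\Pi_{CA}$ solving CA against the mobile $t$-MAd adversary. By the theorem just established, the very same hypothesis also yields a protocol $\Pi_C$ solving Consensus against the stationary $t$-MAd adversary. These are exactly the ``safe-in-the-relaxed-(mobile)-model'' primitive and the ``live-in-the-unrelaxed-(stationary)-model'' primitive that the indulgence framework combines, so the corollary should follow by instantiating the transformation on this pair and checking that our model meets its preconditions.

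First I would spell out the combined protocol, namely the one sketched at the opening of this section. Each processor keeps an estimate, initialized to its input, and repeats forever the two interleaved steps: (i) run $\Pi_{CA}$ on the current estimate, overwrite the estimate with the returned value, and --- the first time the verdict is $commit$ --- irrevocably write that value to the output tape; (ii) run $\Pi_C$ on the resulting estimate and overwrite the estimate with its output. Since processors never halt in our forever-running model, ``deciding'' is realized precisely as this append-only write, after which the processor keeps participating so as to help the others converge.

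Next I would establish the two guarantees. Liveness is the easy direction: from round $\rho+1$ on the adversary is stationary, so every invocation of $\Pi_C$ runs in exactly the model it was designed for, unifies all estimates to a common value, and the following invocation of $\Pi_{CA}$ then commits that value (CA-property~1), making every processor output. Safety is supposed to come entirely from $\Pi_{CA}$: because $\Pi_{CA}$ is robust against the \emph{mobile} adversary, each of its invocations satisfies both CA properties in \emph{every} round --- the stationary tail being only a restriction of the mobile regime --- so the first time some processor commits a value $v$, CA-property~2 forces every estimate to $v$, which is the intended locking of the decision that should preclude any conflicting output.

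The main obstacle is precisely making this locking argument rigorous across the initial mobile prefix: the $\Pi_C$ step interleaved between two CA invocations carries \emph{no} guarantee while the adversary is still mobile, so one must rule out that a misbehaving $\Pi_C$ scrambles an already-unanimous estimate $v$ into a mixed configuration from which the next CA could commit a different $v'$. Closing this gap is where care is needed: it reduces to showing that once all estimates equal the committed $v$ neither the intermediate $\Pi_C$ nor the subsequent CA can reintroduce another value, which rests on CA-property~1 together with CA-validity (an output is always some processor's input) and, correspondingly, on verifying that the abstract MAd/mobile model satisfies the preconditions under which the transformation of \cite{PODC:Guerraoui00} preserves safety --- rather than re-deriving that transformation from scratch.
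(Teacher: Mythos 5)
Your construction is exactly the paper's: the paper proves this corollary in one line by alternating a stationary-consensus protocol $\Pi_C$ (obtained from the preceding theorem) with the mobile-CA protocol and deciding upon the first $commit$, citing \cite{PODC:Guerraoui00}. So the approach matches. The obstacle you single out --- that during the mobile prefix the interleaved $\Pi_C$ carries no guarantee and can turn a unanimous estimate $v$ into a mixed one from which a later CA could commit $v'\neq v$ --- is genuine, and the paper does not address it either. However, the patch you sketch does not close it: CA-property~1 only applies if the inputs to the \emph{next} CA are unanimous, and that unanimity is precisely what a misbehaving $\Pi_C$ would destroy; CA-validity together with property~2 still permits a later CA instance with mixed inputs to commit the wrong value.

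The missing step is to observe that $\Pi_C$ cannot be an arbitrary stationary-consensus protocol but should be instantiated as the CA-based phase-king protocol \statcons of Section~\ref{sec:stat} built from the \emph{mobile} CA primitive. Then, whenever all processors enter $\Pi_C$ with the same value $v$, every internal \ca invocation commits $v$ (persistency of the mobile CA, which holds in every round of the mobile regime), so every processor ignores every king and outputs $v$; hence $\Pi_C$ preserves unanimity even during the mobile prefix. With that lemma your induction goes through: first top-level commit of $v$ forces all estimates to $v$ (property~2), $\Pi_C$ preserves this unanimity, and property~1 makes every subsequent top-level CA commit $v$. Without it, the claim that a generic stationary $\Pi_C$ preserves unanimity under mobility is exactly the conjecture the paper states in Section~1.2 that it cannot prove.
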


We would like to have the theorem: 
If Consensus is solvable in $M$,  then binary CA is solvable in mobile $M$.
We conjecture there is a way to prove this generically, but for now we show it case by case.

\section{Commit-Adopt for a Stationary Adversary}\label{sec:stat}

As a warmup we start with stationary adversasry, i.e., the adversary corrupts up to $t$ processors through the protocol and never changes his corruption. Most of the results in this section can be easily obtained by existing literature. Nonetheless, we include them here for completeness and to be able to refer to them in the following section. 

In order to establish the connection between consensus and commit-adopt we use the following simple reduction from~\cite{Fit02,TCC:ZikHauMau09} (which in terms relies on ideas from~\cite{FOCS:BerGarPer89,FitMau98}). Let \ca be a protocol for commit-adopt in the stationary setting secure against $t$ corruptions. Then the following simple  phase king protocol, which we refer to as \statcons allows us to construct consensus out of \ca. 

\begin{itemize}
    \item Let $x_i$ be the input of processor $\party_i$. Every party sets $temp_i:=x_i$
    \item For $i=1,\ldots,n$
    \begin{enumerate}
        \item The processors execute \ca on inputs  $temp_1,\ldots,temp_n$; Denote by $y_i$  the output of $\party_i$ in \ca. By definition of \ca, for each $\party_j$, for some $b_j$ we have $y_j\in\{commit(b_j),adopt(b_j)\}$  
        \item $\party_i$ sends $b_i$ to every $\party_j$ who denotes the received value as $b_{i\rightarrow j}$. 
        \item Each $\party_j$ sets
        $temp_j:=\left\{\begin{array}{l}
        b_j \text{ if } y_j=commit(b_j)\\ 
        b_{i\rightarrow j} \text{ otherwise}
        \end{array}
        \right.$
\end{enumerate}
\item Every processor outputs $temp_n$ 
\end{itemize}

\begin{theorem}[\cite{Fit02,TCC:ZikHauMau09}]\label{theorem:fitzi} If \ca solves Commit-Adopt in the stationary Byzantine MAd adversary model then \statcons solves consensus in the same model. 
\end{theorem}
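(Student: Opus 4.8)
The plan is to verify the two consensus requirements—validity (condition~1) and agreement (condition~2)—for \statcons, using only the two guarantees of \ca: (CA-persistence) if all inputs to a \ca-instance are equal to $v$ then every party outputs $commit(v)$, and (CA-consistency) if any party outputs $commit(b)$ then every party outputs $commit(b)$ or $adopt(b)$, so that every party's associated bit equals $b$. The point specific to the MAd model is that every party, whether or not it is corrupted, executes its local code faithfully—the adversary only tampers with \emph{outgoing} messages—so each party correctly computes its \ca-output $y_j$ and its update of $temp_j$, and the two \ca guarantees above hold for all $n$ parties simultaneously.

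First I would isolate a persistence invariant: if at the start of some iteration $i$ all $temp$-values equal a common bit $v$, then after iteration $i$ they are still all $v$, and hence (by induction) they remain $v$ through the final iteration and every party outputs $v$. This is immediate from CA-persistence: the \ca in iteration $i$ runs on identical inputs $v$, so every $\party_j$ commits, sets $temp_j:=b_j=v$, and the king's message is irrelevant. Validity follows at once: if every input equals $v$, the invariant holds at the start of iteration $1$ and propagates, so all output $temp_n=v$; when the inputs are mixed both bits are present as inputs, so condition~1 is vacuous for binary consensus.

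The heart is agreement, which I would obtain from the existence of an honest king. Since \ca (and hence the construction) is claimed only for a resilience bound $t<n$ (the Byzantine bounds of interest satisfy $t<n/3$ or $t<n/2$), the stationary adversary leaves at least one processor $\party_k$ uncorrupted, and I claim that after iteration $k$ all $temp$-values coincide. Consider the \ca run in that iteration. If some party outputs $commit(b)$, then by CA-consistency every party's bit equals $b$; in particular $\party_k$ holds $b_k=b$ and, being uncorrupted, broadcasts $b$ faithfully to all, so each $\party_j$ sets $temp_j$ to $b$ whether it committed (keeping $b_j=b$) or adopted (taking the king's $b$). If no party commits, then every party adopts and takes the king's value $b_{k\rightarrow j}$, which is the same $b_k$ for all $j$ because $\party_k$ is uncorrupted. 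In both cases all $temp$-values agree after iteration $k$; the persistence invariant then freezes this common value through iteration $n$, so every party outputs the same bit.

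I expect the main obstacle to be phrasing the argument so that it yields agreement for \emph{all} parties, including the corrupted ones, as the task-based consensus definition in this model demands. This is exactly where the MAd formulation helps: corruption affects only outgoing messages, so every party's computation of $y_j$ and $temp_j$ is honest and the \ca guarantees are enjoyed by all $n$ parties; the only place corruption matters is the king's broadcast, which is neutralized by choosing $\party_k$ uncorrupted. Care is also needed in the commit case to invoke CA-consistency (not persistence) to pin down the common bit $b$ and to observe that the honest king necessarily holds and forwards exactly this $b$.
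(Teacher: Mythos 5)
Your proof is correct and follows essentially the same route as the one the paper relies on (and sketches in a commented-out passage): a persistence invariant from CA-persistence giving validity, plus the existence of an uncorrupted king round in which CA-consistency forces all $temp$-values to coincide, after which persistence freezes the agreed value. The observations you single out---that in the MAd model corrupted parties still compute honestly so the \ca guarantees and the final agreement apply to all $n$ parties, and that only the king's outgoing broadcast can be tampered with---are exactly the points the paper's model section is set up to deliver.
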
 

It is straight-forward to verify that the above theorem applies verbatim to send-omission corruption  and the authenticated Byzantine setting. For completeness we state this in the following corollary.

\begin{corollary}\label{corollary:fitzi}
If \ca solves Commit-Adopt in the stationary (send)-omission or Authenticated Byzantine MAd adversary model then \statcons solves consensus in the corresponding model.
\end{corollary}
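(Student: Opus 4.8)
The plan is to observe that the correctness proof of Theorem~\ref{theorem:fitzi} never actually invokes the full Byzantine power of the adversary, but only factors through three model-independent facts, and then to verify that each of these facts continues to hold in the send-omission and the authenticated Byzantine MAd models. The three facts are: (a) the sub-protocol \ca satisfies the two Commit-Adopt properties; (b) among the $n$ kings $\party_1,\ldots,\party_n$ at least one is uncorrupted throughout the execution; and (c) the value $b_i$ that an uncorrupted king $\party_i$ sends in step~2 is delivered faithfully, i.e. $b_{i\rightarrow j}=b_i$ for every $\party_j$. Given (a)--(c), the phase-king argument of \statcons goes through verbatim, so the corollary reduces to re-checking (a)--(c).

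First I would re-derive the two consensus properties from (a)--(c), to make explicit that nothing else is used. Validity uses only the first CA property: if every processor starts with the same bit $v$, then in the first \ca invocation all output $commit(v)$, so each sets $temp_j:=v$ and thereafter ignores every king, giving $temp_n=v$. For agreement, let $\party_i$ be an uncorrupted king guaranteed by (b). By the second CA property, if any processor committed in the \ca of phase $i$ to some bit $b$, then every processor's output lies in $\{commit(b),adopt(b)\}$, so every $b_j=b$ and in particular $b_i=b$; a processor that committed keeps $b$, and a processor that only adopted sets $temp_j:=b_{i\rightarrow j}$, which by (c) equals $b_i=b$ (and if nobody committed, the uncorrupted king's value still equalizes all $temp$ values to $b_i$). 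Thus after phase $i$ all $temp$ values agree, and from phase $i+1$ on every \ca runs on a common input, so by the first CA property all commit and ignore every later king, preserving the common value down to the output $temp_n$.

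It then remains to check (a)--(c) in the two target models, which is the only model-dependent part. Fact (a) is exactly the hypothesis of the corollary. Fact (b) holds because in each regime where stationary CA (equivalently consensus) is solvable we have $t<n$---indeed $t<n$ for send-omission and $t<n/2$ for authenticated Byzantine---so fewer than $n$ kings are corrupted and at least one is uncorrupted. Fact (c) is where the MAd modeling does the work: since a MAd adversary only tampers with the \emph{outgoing} messages of \emph{corrupted} processors, an uncorrupted king's direct transmissions in step~2 reach every processor unaltered over the reliable channels in all three corruption types---in send-omission its messages cannot be dropped, and in authenticated Byzantine they can be neither dropped nor forged. I expect the only point needing genuine care to be fact (c): one must confirm that the king's step-2 message is an honest uncorrupted transmission rather than a relayed claim about some other party, so that neither message dropping nor the forgery constraints of the authenticated model can affect it. Once this is noted, the proof of Theorem~\ref{theorem:fitzi} applies without change, and the \emph{same} bound $t$ on the number of stationary corruptions is inherited from \ca.
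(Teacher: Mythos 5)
Your proposal is correct and matches the paper's intent exactly: the paper offers no separate proof, merely asserting that the phase-king argument of Theorem~\ref{theorem:fitzi} "applies verbatim" to the other corruption types, and your decomposition into facts (a)--(c) is precisely the verification of why that is so (the argument uses only the CA properties, the existence of an uncorrupted king, and faithful delivery of an uncorrupted sender's direct messages in the MAd model). No gaps; your explicit check of fact (c) --- that the king's step-2 message is a direct transmission and hence untouchable by a stationary MAd adversary in both the omission and authenticated Byzantine settings --- is the one point the paper leaves implicit.
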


Furthermore, the following theorem follows trivially from the trivial reduction of commit-adopt to consensus: 
\begin{enumerate}
\item The processors run consensus; denote by $y_i$ the output of processor $\party_i$
\item Every $\party_i$ outputs $commit(y_i)$ \end{enumerate}

\begin{theorem}\label{lemma:st_catocons}
If there exists a protocol for solving binary consensus in the stationary (send)-omission, Byzantine, or Authenticated Byzantine MAd adversary model, then there exists a protocol \ca solving binary commit-adopt in the corresponding model.
\end{theorem}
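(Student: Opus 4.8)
The plan is to verify that the two-line reduction stated just above the theorem---run the assumed binary consensus protocol and have each $\party_i$ output $commit(y_i)$, where $y_i$ is its consensus output---meets all three requirements of the binary commit-adopt specification. Since consensus is strictly stronger than commit-adopt, I expect this to amount to a direct check of the two CA properties plus the output-validity clause, with no real obstacle beyond bookkeeping; the only subtlety worth flagging is the uniform treatment of corrupted processors, which I address next.

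First I would record the consensus guarantees I am allowed to use: by the task-based definition of solvability in the MAd model, every processor (including any corrupted one) outputs a bit $y_i$ such that (validity) $y_i$ equals the input of some processor, and (agreement) all the $y_i$ are equal. The crucial point to stress is that these guarantees apply to \emph{all} processors uniformly---there is no ``giving up'' of corrupted parties---precisely because the model defines producing an output as writing on an adversary-inaccessible output tape. This is exactly what lets the CA clauses, which likewise quantify over all processors, be inherited cleanly, and it is why this reduction would fail under the traditional Byzantine definition that makes no promise about corrupted parties' outputs.

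Next I would discharge the three CA clauses. Output validity is immediate: $commit(y_i)$ carries the value $y_i$, which by consensus validity equals some processor's input, as CA requires. For CA property (1), suppose all processors start with a common value $v$; then $v$ is the only input value present, so consensus validity forces every $y_i=v$, and hence every processor outputs $commit(v)$. For CA property (2), suppose some processor outputs $commit(v)$, i.e.\ its consensus output is $v$; then by consensus agreement every other processor's output is also $v$, so every processor outputs $commit(v)\in\{commit(v),adopt(v)\}$. In fact the reduction never produces an $adopt$---everyone commits---which is permitted by the specification.

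Finally I would observe that the argument used nothing specific to the corruption type: it invoked only the generic validity and agreement properties of consensus, which hold by hypothesis in each of the stationary send-omission, Byzantine, and Authenticated Byzantine MAd models. Thus the same reduction and the same verification establish the claim simultaneously for all three models. The one remaining thing to confirm is that the constructed commit-adopt protocol \ca still lives in the stated model (same synchronous rounds, same channels): this is automatic, since the reduction appends only a single deterministic local relabeling step $y_i\mapsto commit(y_i)$ and introduces no additional communication or corruption opportunities.
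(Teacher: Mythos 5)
Your proposal is correct and follows exactly the paper's intended argument: the paper states the same two-line reduction (run consensus, output $commit(y_i)$) and declares the theorem to follow trivially from it, so your verification of the CA clauses via consensus validity and agreement is precisely the omitted routine check. Your added remark that the uniform output guarantee for corrupted processors is what makes the inheritance clean matches the paper's own discussion of its task-based definition.
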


\subsection{(Send)-Omission Corruption}
A protocol for send-omission corruptions tolerating any number $t<n$ of corrupted processors follows from~\cite{TCC:ZikHauMau09}. This bound is trivially tight. We note that the analogous positive result in our model---where every party needs to output---follows directly from the corresponding bound for mobile adversary (cf. Theorem~\ref{ch2.lemma:mob_o_ca}). It following from Theorem~\ref{lemma:st_catocons} and Corollary~\ref{corollary:fitzi} that this bound is also tight for CA. 

\begin{theorem}
There exists a protocol for solving CA in the stationary (send)-omission t-MAd adversary model if and only if $t<n.$ 
\end{theorem}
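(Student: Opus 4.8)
The plan is to establish the two directions separately, leaning on the reductions already in hand rather than constructing anything new. For sufficiency ($t<n$ implies a protocol exists) I would avoid building a protocol from scratch and instead observe that the stationary $t$-MAd adversary is merely a mobile $t$-MAd adversary that never relocates its corruption set. Hence any CA protocol secure against the mobile send-omission $t$-MAd adversary---which by Theorem~\ref{ch2.lemma:mob_o_ca} exists for all $t<n$---is \emph{a fortiori} secure against the stationary one. Equivalently, one may take the send-omission consensus protocol of~\cite{TCC:ZikHauMau09} for $t<n$ and push it through the trivial consensus-to-CA reduction of Theorem~\ref{lemma:st_catocons}. Either route delivers sufficiency immediately.

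For necessity (a protocol exists implies $t<n$) I would argue the contrapositive, showing CA is unsolvable at $t=n$ for $n\ge 2$. The slick route is again a reduction: were \ca to solve CA against the stationary send-omission $t$-MAd adversary for $t=n$, then by Corollary~\ref{corollary:fitzi} the \statcons template would solve consensus against the same adversary for $t=n$, contradicting the (trivially tight) requirement $t<n$ for send-omission consensus.

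For completeness I would also record the self-contained impossibility. With $t=n$ every processor is corrupt, so the adversary is entitled to suppress every outgoing message in every round; under the worst-case task definition each processor then produces its output as a function of its own input and an empty received transcript. Property~1 forces the all-$0$ execution to make $\party_1$ output $commit(0)$ and the all-$1$ execution to make $\party_2$ output $commit(1)$. But the mixed execution in which $\party_1$ holds $0$, $\party_2$ holds $1$, and all messages are again dropped is locally indistinguishable from the all-$0$ execution for $\party_1$ and from the all-$1$ execution for $\party_2$; hence $\party_1$ outputs $commit(0)$ while $\party_2$ outputs $commit(1)$, violating Property~2.

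I expect no real obstacle---the paper itself calls the bound ``trivially tight.'' The only points warranting care are confirming that send-omission with $t=n$ genuinely licenses a totally silent network (it does, since all parties are corrupt and a corrupt party's outgoing messages may be dropped arbitrarily), and making the indistinguishability watertight, i.e.\ that a silent network renders the three executions locally identical for the two relevant parties. The degenerate case $n=1$ should be excluded, since there CA is trivially solvable for every $t$ and the threshold statement is vacuous.
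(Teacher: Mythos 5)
Your proposal is correct and follows essentially the same route as the paper: sufficiency by observing that the stationary adversary is a special case of the mobile one (Theorem~\ref{ch2.lemma:mob_o_ca}), and necessity by the reduction of Corollary~\ref{corollary:fitzi} against the trivially tight consensus bound. The only difference is that you additionally spell out the drop-all-messages indistinguishability argument that the paper leaves implicit under ``trivially tight,'' and you flag the degenerate $n=1$ case; both are sound and harmless additions.
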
 


\subsection{Byzantine}

Lamport et al.~\cite{PSL80,LSP82} proved that byzantine consensus is possible if and only if $t<n/3$ of the parties are corrupted. An efficient protocol for this bound  was later given by Berman et al.~\cite{FOCS:BerGarPer89}.  We will call their protocol BGP-Consensus. Although their definition of consensus does not give any guarantees on the output of corrupted processors, one can easily use their protocol in a black-box way to add this guarantee, by adding and extra round in which every party sends his output from BGP-Consensus to everyone, and everyone outputs the value received by at least 2n/3 of the processors. Since in BGP-Consensus all uncorrupted processors output the same value $v$, all processors will receive $v$ for all of them and $v'\neq v$ from at most $t<n/3$ parties, so they will all output $v$. From the above and the equivalence of consensus and commit-adopt in the byzantine model (Theorem~\ref{lemma:st_catocons} and Corollary~\ref{corollary:fitzi}) we get the following: 

\begin{theorem}[\cite{FOCS:BerGarPer89}]\label{theorem:st_byz}
There exists a protocol for solving CA in the stationary Byzantine $t$-MAd adversary model if and only if $t<n/3.$
\end{theorem}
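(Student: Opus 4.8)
The plan is to prove both directions by transporting the classical tight bound $t<n/3$ for Byzantine consensus of Lamport et al.~\cite{PSL80,LSP82} into the MAd formulation, and then letting the consensus/commit-adopt equivalence already established in Theorem~\ref{theorem:fitzi} and Theorem~\ref{lemma:st_catocons} carry the argument. In other words, I would not reprove the $n/3$ threshold from scratch; I would reuse the BGP protocol for the achievability side and the LSP impossibility for the converse, and in each case bridge the (small) gap between the traditional definitions and our task-based MAd definition.

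For the \emph{if} direction ($t<n/3 \Rightarrow$ CA solvable), I would start from BGP-Consensus~\cite{FOCS:BerGarPer89}, which solves traditional Byzantine consensus for $t<n/3$, and lift it to a protocol solving the stronger task-based consensus of our MAd model, in which even corrupted parties must output the agreed value. The lifting is exactly the extra-round trick sketched just before the theorem: after BGP-Consensus, every party sends its output to all and then outputs the value received from at least $2n/3$ parties, writing it on the protected output tape. The observation making this work in the MAd model is that a corrupted processor is honest in state---only its \emph{outgoing} messages are tampered---so it still \emph{receives} the untampered common value $v$ from all $n-t>2n/3$ uncorrupted parties and hence itself computes and writes $v$. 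Thus all $n$ parties output the common BGP value, giving agreement for everyone, while the validity clause of the task relation follows from BGP validity in the binary case. Having obtained task-based consensus for $t<n/3$, I would invoke Theorem~\ref{lemma:st_catocons} to conclude that \ca is solvable in the same model.

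For the \emph{only if} direction (CA solvable $\Rightarrow t<n/3$), I would argue by contraposition. Suppose \ca were solvable against a stationary Byzantine $t$-MAd adversary with $t\geq n/3$. Then by Theorem~\ref{theorem:fitzi} the protocol \statcons would solve task-based consensus against the same adversary. Restricting attention to the uncorrupted parties, such a protocol would in particular satisfy the usual agreement and validity guarantees of traditional Byzantine consensus for $t\geq n/3$, contradicting the impossibility of Lamport et al.~\cite{PSL80,LSP82}. Hence no such \ca protocol can exist, i.e., solvability forces $t<n/3$.

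The step I expect to be the main obstacle is the transfer of the classical results across two definitional gaps: (i) our MAd adversary tampers only with outgoing messages and treats corrupted parties as honest-in-state, whereas the LSP model gives up corrupted parties entirely, and (ii) our task-based consensus is strictly stronger than traditional consensus. I would resolve (i)--(ii) by observing that any solver of task-based MAd-consensus, projected onto the uncorrupted parties, is in particular a solver of traditional consensus, so the $t\geq n/3$ impossibility applies a fortiori; and symmetrically that the tamper-only-outgoing restriction is precisely what lets the extra-round argument certify correct outputs even for the corrupted parties. Some care is also needed to confirm that BGP validity yields the task relation's validity clause (the output traces back to some party's input) in the binary setting.
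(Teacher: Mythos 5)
Your proposal is correct and follows essentially the same route as the paper: the paper likewise takes BGP-Consensus, appends the extra round in which everyone forwards its output and decides on the value received from at least $2n/3$ processors (relying on the MAd adversary tampering only outgoing messages so that even corrupted parties receive and output the common value), and then invokes the consensus/commit-adopt equivalence (Theorem~\ref{lemma:st_catocons} and Corollary~\ref{corollary:fitzi}) together with the Lamport et al.\ $t<n/3$ bound for tightness. Your additional care about the two definitional gaps is a sensible elaboration of what the paper leaves implicit.
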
 

\subsection{Authenticated Byzantine}
In the Authenticated Byzantine setting a lower bound of $t<n/2$ corruptions  was proved by Fitzi~\cite[Proposition~3.1]{Fit02}. 
A protocol for the authenticated setting follows from the observation that, in the stationary model, one can achieve our restrictions on the Authenticated Byzantine MAd adversary by assuming (perfectly) unforgeable signatures, and having every party digitally sign his messages using a standard existentially unforgeable signatures scheme.\footnote{Indeed, our definition of Authenticated Byzantine is equivalent to having an imaginary perfect signature scheme. Note that, in reality, such signatures do not exists, hence in an actual realization, the protocol will achieve consensus except with negligible probability. } Indeed, under this implementation, the adversary will be unable to create any message on behave of any uncorrupted processor. 

The above observation implies that the folklore reduction of consensus to broadcast for $t<n/2$ yields a consensus protocol in our model: Have every party use Dolev-Strong broadcast~\cite{DolStr83} to broadcast his input to everyone, and then take majority of the broadcasted values.   

\begin{theorem}[\cite{Fit02,DolStr83}]\label{theorem:st_auth_byz}
There exists a protocol for solving binary CA in the stationary authenticated Byzantine $t$-MAd adversary model if and only if $t<n/2.$ 
\end{theorem}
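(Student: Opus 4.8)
The plan is to prove the two directions separately, leaning on the reductions already established in this section.

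For the necessity (``only if'') direction, I would argue via the CA-to-consensus reduction. Suppose \ca solves binary commit-adopt against a stationary authenticated Byzantine $t$-MAd adversary. By Corollary~\ref{corollary:fitzi}, the protocol \statcons then solves (task-based) consensus in the same model. Since our task-based consensus gives agreement and validity guarantees even for corrupted parties, it is at least as strong as the classical notion of authenticated Byzantine consensus; hence any solution to it, restricted to the outputs of uncorrupted parties, also solves classical consensus. Fitzi's lower bound~\cite[Proposition~3.1]{Fit02} states that classical authenticated Byzantine consensus is impossible for $t\geq n/2$, so we must have $t<n/2$.

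For the sufficiency (``if'') direction, assume $t<n/2$. As observed in the text preceding the theorem, in the stationary model the abstract Authenticated Byzantine MAd restriction is realized by equipping each party with an idealized, perfectly unforgeable signature scheme and having every party sign each outgoing message: since the adversary never learns the signing keys of uncorrupted parties, it cannot inject any message on their behalf, which is exactly the guarantee the abstract model provides. Under this realization I would run the folklore consensus-from-broadcast reduction: every party uses Dolev--Strong broadcast~\cite{DolStr83} to broadcast its input bit, and then outputs the majority of the $n$ broadcast values. Dolev--Strong is secure for any number $t<n$ of corruptions in the authenticated setting, so agreement and validity hold for each individual broadcast. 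Consequently all parties hold the same length-$n$ vector of broadcast values and agree on its majority; and if all honest parties start with the same bit $b$, then more than $n/2$ of the broadcast values equal $b$ (there are $n-t>n/2$ honest senders), forcing the majority, and hence the common output, to be $b$. This yields consensus, and applying the trivial consensus-to-CA reduction of Theorem~\ref{lemma:st_catocons} (run consensus and output $commit(y_i)$) gives the desired CA protocol.

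The step I expect to require the most care is checking that the broadcast-plus-majority protocol meets the \emph{task-based} consensus specification, i.e., that even corrupted parties output the agreed value. This is where the MAd formulation is essential: a MAd-corrupted party still receives all incoming messages faithfully and executes the protocol honestly on its internal state---only its outgoing messages are tampered with---so it reconstructs exactly the same broadcast vector as an honest party and writes the same majority to its protected output tape. Thus agreement and validity extend to all $n$ parties rather than merely the uncorrupted ones, which is precisely what distinguishes the argument here from the classical setting and is the crux of why the bound $t<n/2$ remains achievable for the task-based definition.
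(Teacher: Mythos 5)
Your argument follows essentially the same route as the paper: the lower bound is inherited from Fitzi via the \ca-to-consensus reduction (Corollary~\ref{corollary:fitzi}), and the upper bound realizes the abstract authenticated MAd model with idealized signatures, runs Dolev--Strong broadcast plus majority~\cite{DolStr83}, and converts consensus to \ca via Theorem~\ref{lemma:st_catocons}. Your additional observation that MAd-corrupted parties still receive all messages and hence produce the agreed output is a correct and welcome elaboration of a point the paper leaves implicit here.
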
 

\section{Commit-Adopt for a Mobile Adversary}

\subsection{(Send)-Omission Corruption}
We show that if in each round a MAd adversary can pick any $n-1$ processors and drop any message of this processor it wishes, and then change to another $n-1$ processors in the next round then nevertheless binary-CA is solvable.

The algorithm proceeds in two rounds: 
\begin{enumerate}
    \item Every processor sends its input to everyone
    \item Every processor $\party_i$: If there is a bit $b$ such that only $b$ was received, then send $propose$-$commit(b)$ to everyone. Else send $no$-$commit$
    \item Every processor $\party_i$: If for the bit $b$ only $propose$-$commit(b)$ was received, then output $commit(b)$; else if for a unique $b$, $commit(b)$ was received (by any party), output $adopt(b)$; otherwise output $adopt(0)$.
\end{enumerate}

In each round there is at least one processor heard by all. Thus, after one round where processors exchange inputs, a processor proposes commit $b$ in the next round only if it heard just $b$. Since all heard the same bit from some processor only a single bit can be proposed to be committed, in the next round.

In the second round, a processor that only receives proposed commit $b$, commits $b$, else if it receives proposed commit $b$, it adopts $b$.
If a processor committed then it received a proposed commit $b$ from the at least one processor all hear from. thus, if one commits all receive proposed commit, and thus will at least adopt. 

\begin{theorem}\label{ch2.lemma:mob_o_ca}
  There exists a protocol solving commit-adopt against an  mobile MAD $t$-adversary for any $t<n$. 
\end{theorem}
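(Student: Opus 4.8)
The plan is to verify the three defining properties of commit-adopt---commit-validity (all-same inputs force all-commit), graded agreement (one $commit(b)$ forces everyone to $commit(b)$ or $adopt(b)$), and output-validity (every output value is some processor's input)---for the two-round protocol displayed above, leaning throughout on a single structural invariant. Since $t<n$, in \emph{every} round at least one processor is uncorrupted, and because omission corruption can only drop a processor's messages and never alter them, that uncorrupted processor's message is delivered \emph{unchanged to all} processors. I will call such a processor the \emph{herald} of its round; note it may differ from round to round, and that every message any processor ever receives is the genuine message its sender prepared.

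First I would establish a uniqueness lemma for the proposed bit. Let $p^*$ be the round-1 herald and $x^*$ its input; then every processor receives $x^*$ in round~1. Hence a processor emits $propose$-$commit(b)$ only when the only bit it received is $b$, which together with having received $x^*$ forces $b=x^*$. Consequently all $propose$-$commit$ messages sent in round~2 carry the same bit $x^*$, so two conflicting proposals never coexist. Commit-validity is then immediate: if all inputs equal $v$, every round-1 message is $v$, so every processor sends $propose$-$commit(v)$, so in round~2 every processor receives only $propose$-$commit(v)$ (at least one, from the round-2 herald) and outputs $commit(v)$. For graded agreement, suppose some $\party_i$ outputs $commit(b)$; by the rule this means $\party_i$ received only $propose$-$commit(b)$, so the round-2 herald $q^*$---whose message reaches $\party_i$---must itself have sent $propose$-$commit(b)$. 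Since $q^*$ is a herald, \emph{every} processor receives its $propose$-$commit(b)$, and by uniqueness no one receives $propose$-$commit(b')$ with $b'\neq b$; thus each processor either sees only $propose$-$commit(b)$ (and outputs $commit(b)$) or additionally sees some $no$-$commit$ (and, seeing $b$ as the unique proposed bit, outputs $adopt(b)$). Either way the output is $commit(b)$ or $adopt(b)$.

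Output-validity needs a short case check on the default branch $adopt(0)$. Any output $commit(b)$ or $adopt(b)$ arising from a received $propose$-$commit(b)$ has $b=x^*$, a genuine input. The only remaining case is the fallback $adopt(0)$, which a processor reaches only when it receives no uniquely-proposed bit; by commit-validity this cannot happen when all inputs coincide (everyone would commit that common value), so mixed inputs must be present, $0$ is indeed some processor's input, and $adopt(0)$ is valid. The main obstacle---and the only place where the \emph{mobile} nature of the adversary bites---is that no single processor is reliable across both rounds, so I cannot pin down one trustworthy source; the argument must instead be carried entirely through the per-round heralds, using the (possibly distinct) round-1 herald to pin the proposable bit and the round-2 herald to force its propagation. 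The fact that omission cannot forge values is exactly what makes these heralds usable, and this same skeleton will have to be substantially reinforced once tampering is allowed in the Byzantine and authenticated-Byzantine mobile settings.
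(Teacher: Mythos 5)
Your proof is correct and takes essentially the same approach as the paper: the paper's argument likewise rests on the observation that in each round at least one processor is heard by all (your ``herald''), uses the round-1 such processor to show that only a single bit can ever be proposed for commitment, and uses the round-2 such processor to show that one commit forces everyone to at least adopt. Your write-up is simply more explicit, additionally spelling out the output-validity check that the paper leaves implicit.
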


\subsection{Byzantine}
It follows directly from Theorem~\ref{theorem:st_byz} and the fact that a mobile adversary is at least as strong as a stationary that at most $t<n/3$ Byzantine mobile corruptions can be tolerated for commit-adopt. In the following we describe a construction that  meets this bound.  Our commit-adopt protocol follows the structure of the graded consensus construction from~\cite{Fit02,TCC:ZikHauMau09}, where we use a weak consensus primitive. 

\begin{protocol}{$\pa{\wc}(\PS,\vec{x}=(x_1,\ldots,x_n))$}
\item Each $p_i\in\PS$ sends $x_i$ to every $p_j$; $p_j$ denotes the set of
  players who sent him $0$ (resp. $1$) as $\psetp{j}{0}$ (resp.
  $\psetp{j}{1}$).
\item Each $p_j$ sets $y_j:=\left\{\begin{array}{cl} 0 &\text{ if } |\psetp{j}{0}|> 2n/3, \text{ else }\\[.5ex]
      1 &\text{ if } |\psetp{j}{1}|> 2n/3, \text{ else }\\[.5ex]
      \nv
    \end{array}\right.$
\end{protocol}

\begin{theorem}\label{ch2.lemma:perf_a.wc}
  The protocol
  \pa{\wc} satisfies the following properties against an eventually-static mobile MAD $t$-adversary with $t<n/3$: (weak consistency) There exists
  some $y\in\{0,1\}$ such that every $p_j\in \PS$ sets
  \mbox{$y_j\in\{y,\nv\}$.}  (persistency) If every $p_i\in \PS$ has the same input $x$ then they all set $y_j:=y=x$. (termination) All parties set their $y_i$ value after a single round
\end{theorem}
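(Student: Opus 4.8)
The plan is to verify the three properties separately, with weak consistency being the only one that requires a genuine counting argument. Since $\wc$ consists of a single round of all-to-all communication (Step~1) followed by purely local computation (Step~2), termination is immediate: every $p_j$ assigns $y_j$ as soon as the round-$1$ messages arrive. Moreover, because the whole protocol occupies a single round, the mobile, stationary, and eventually-stationary adversaries all coincide on it---each tampers with the outgoing messages of at most $t$ processors during that one round---so it suffices to argue against an adversary that corrupts at most $t$ senders. Throughout I would use the structural fact that an \emph{uncorrupted} processor transmits its true input identically to all recipients, whereas only a corrupted processor can deliver conflicting values to two different parties.

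For persistency, suppose every $p_i$ holds the same input $x$. At least $n-t$ processors are uncorrupted, and each of them sends $x$ to every $p_j$; hence $|\psetp{j}{x}| \ge n-t > 2n/3$ for every $p_j$, where the last inequality uses $t<n/3$. By the threshold rule of Step~2 this forces $y_j = x$ for all $p_j$, as required. Note that corrupted senders cannot harm this count, since the uncorrupted senders alone already push it above the $2n/3$ threshold.

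The heart of the proof is weak consistency, and I would argue it by contradiction. Suppose some $p_j$ sets $y_j=0$ and some $p_k$ sets $y_k=1$. Then $|\psetp{j}{0}| > 2n/3$ and $|\psetp{k}{1}| > 2n/3$, so by inclusion--exclusion over the $n$ processors,
\[
|\psetp{j}{0}\cap\psetp{k}{1}| \;\ge\; |\psetp{j}{0}| + |\psetp{k}{1}| - n \;>\; \frac{2n}{3}+\frac{2n}{3}-n \;=\; \frac{n}{3}.
\]
Every processor in this intersection sent $0$ to $p_j$ and $1$ to $p_k$, i.e.\ equivocated, so by the structural fact each such processor must be corrupted in this round. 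Thus more than $n/3$ processors are corrupted, contradicting $t<n/3$. Hence no two parties output opposing bits, which is precisely weak consistency: take $y$ to be the unique bit output by some party (or arbitrary if every party outputs $\nv$), and every $p_j$ then sets $y_j\in\{y,\nv\}$.

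The main obstacle---and the point the whole argument turns on---is the calibration of the $2n/3$ threshold against the corruption budget $t<n/3$: two sets each of size exceeding $2n/3$ must overlap in more than $n/3$ elements, and every overlapping element is a \emph{witnessed} equivocation that only a corrupted processor could produce. I would flag that a tempting but weaker route, counting uncorrupted $0$-senders and $1$-senders separately and summing their lower bounds, only yields $|U_0|+|U_1| > 4n/3 - 2t \le n$ and hence consistency for $t\le n/6$; the decisive step is instead to bound the intersection directly and read off the corruptions from it.
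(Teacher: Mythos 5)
Your proposal is correct and follows essentially the same route as the paper: termination and persistency are handled identically, and your weak-consistency argument (two sets of size $>2n/3$ must overlap in $>n/3$ equivocators, who must all be corrupted) is the same counting the paper performs, merely phrased via the intersection rather than by first extracting the $>n/3$ uncorrupted $0$-senders and bounding $|\psetp{j}{1}|\leq 2n/3$ directly. Your closing remark about why the naive separate-count route only reaches $t\le n/6$ is a useful observation not in the paper, but the substance of the proof is the same.
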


\begin{proof}
\medskip

\noindent{(termination) } Termination is trivial since all parties set the value of $y_i$ at the end of their single round. 
\medskip

\noindent{(weak consistency) } Assume, that a player $p_i$ sets  $y_i=0$. This means that $y_i=0: |\psetp{i}{0}|> 2n/3$. But since less than $1/3n$ of the parties in $|\psetp{i}{0}|$
  might be corrupted, this means that more than 1/3 of the parties are uncorrupted during the protocol's single round and also send $0$ to every other $p_j$. Hence $|\psetp{i}{1}|\leq 2n/3$ which means that no $p_j$ will decide on $1$. 
\medskip

\noindent{(persistency) } If all non-actively corrupted players
  have input $0$ (the case of pre-agreement on $1$ can be handled
  symmetrically) then every $p_i:\ \PS$  receives $0$ from at least all those parties, i.e., $y_i=0: |\psetp{i}{0}|> 2n/3$ and therefore outputs $0$. 
\end{proof}

\begin{fprotocol}{$\pa{\gc}(\PS,\vec{x}=(x_1,\ldots,x_n))$}
\item The players invoke $\pa{\wc}(\PS,\vec{x})$ and let $y_i$ denote the value set by $p_i$ (note that $y_i\in\{0,1,\nv\}$).
\item Each $p_i\in\PS$ sends $y_i$ to every $p_j$.  $p_j$ denotes the sets
  of players who sent him $0, 1$, and $\nv$ as $\psetp{j}{0},\psetp{j}{1}$, respectively
\item Each $p_j$ sets\\
  $b_j:=\left\{\begin{array}{ll} 0 &\text{ if } |\psetp{j}{0}|\geq|\psetp{j}{1}|
      \\
      1 &\text{ otherwise}
    \end{array}\right.$\\
 \item Each $\party_j$ output $o_j:=\left\{\begin{array}{ll} commit(b_j) &\text{ if } |\psetp{j}{z_j}|>2n/3\\
      adopt(b_j) &\text{ otherwise}
    \end{array}\right.$
\end{fprotocol}

\begin{theorem}\label{ch2.lemma:perf_a.wc}
  The protocol
  \pa{\gc} solves commit-adopt against an mobile Byzantine $t$-MAd adversary with $t<n/3$. 
\end{theorem}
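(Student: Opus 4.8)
The plan is to derive all three commit-adopt properties of $\gc$ from the three properties of $\wc$ established in the preceding theorem (weak consistency, persistency, termination), combined with a round-by-round counting argument. The one structural point to keep in mind throughout is mobility: $\gc$ runs for two rounds — the single round of $\wc$ and the echo round (Step~2) — and the mobile adversary may corrupt a \emph{different} set of at most $t$ processors in each. I would therefore apply the bound $t<n/3$ \emph{separately} to each round, and I would repeatedly invoke the defining feature of the MAd model, namely that corruption only tampers with outgoing messages while leaving each processor's internal computation intact. This is exactly what lets me assert that the weak-consistency conclusion of $\wc$ constrains the $\wc$-value $y_i$ of \emph{every} processor, including those whose round-$1$ outgoing messages were tampered.

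Termination is immediate (two rounds). For persistency, suppose every $\party_i$ inputs the same bit $v$. By persistency of $\wc$ every processor sets $y_i=v$, so in the echo round every processor uncorrupted in that round sends $v$ to all; since at most $t<n/3$ are corrupted in the echo round, every $\party_j$ sees $|\psetp{j}{v}|\geq n-t>2n/3$, whence $b_j=v$ and, as $|\psetp{j}{b_j}|>2n/3$, each $\party_j$ outputs $commit(v)$.

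The heart of the argument is consistency. Suppose some $\party_i$ outputs $commit(b)$, i.e.\ $|\psetp{i}{b}|>2n/3$. At most $t$ of these senders are corrupted in the echo round, so strictly more than $2n/3-t>n/3$ of them are uncorrupted in the echo round and therefore genuinely hold $\wc$-value $b$ and send $b$ to \emph{everyone}. In particular the weak-consensus value $y$ guaranteed for $\wc$ must equal $b$, since more than $n/3>0$ honest processors hold $b$ while all honest $\wc$-values lie in $\{y,\nv\}$. Consequently no processor uncorrupted in the echo round ever sends $1-b$. Now fix any $\party_j$: then $|\psetp{j}{1-b}|\leq t<n/3$, because only echo-round-corrupted processors can send $1-b$, whereas $|\psetp{j}{b}|>n/3$ because the same $>n/3$ honest processors reach $\party_j$ as well. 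Hence $|\psetp{j}{b}|>|\psetp{j}{1-b}|$, which forces $b_j=b$ under the protocol's tie-break, so $\party_j$ outputs $commit(b)$ or $adopt(b)$ — precisely the commit-adopt consistency requirement. Binary validity then follows too: a committed bit $b$ is genuinely held by an uncorrupted processor, so by persistency it is an input when all inputs agree, and otherwise both bits are admissible.

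The step I expect to be the genuine obstacle is the bookkeeping forced by mobility: I must never conflate the round-$1$ and round-$2$ corrupted sets, and I must justify that weak consistency of $\wc$ binds the $\wc$-values even of processors corrupted in round~$1$ but honest in round~$2$ (those are exactly the processors doing the work in the consistency argument). This is legitimate only because, in the MAd model, corruption leaves the internal state untouched, so each processor computes its $\wc$-output correctly from whatever it received and the weak-consistency guarantee applies uniformly to all $n$ processors; making this dependence explicit, rather than implicitly assuming a single fixed corrupted set across both rounds, is the delicate part of the proof.
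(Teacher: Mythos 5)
Your proof is correct and follows essentially the same route as the paper's: use weak consistency of \pa{\wc} to rule out any honest echo-round sender of $1-b$, then count that a $commit(b)$ witness forces more than $n/3$ echo-round-uncorrupted senders of $b$ reaching everyone, so every $\party_j$ has $|\psetp{j}{b}|>n/3>t\geq|\psetp{j}{1-b}|$ and sets $b_j=b$. Your explicit separation of the two rounds' corruption sets and the remark that MAd corruption leaves internal state (hence the $\wc$ guarantee on $y_i$) intact only makes explicit what the paper's proof uses implicitly.
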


\begin{proof}
We need to prove the following properties: 
  \begin{itemize}
      \item {\em (Property 1)}  If for some $z\in\{0,1\}$ some processor $\party_i\in\PS$ outputs $commit(b)$  then every processor  $\party_j\in \PS$ outputs $o_j\in\{commit(b),adopt(b)\}$. 
      \item {\em (Property 2)}  If every $p_i\in \PS$ has the same input $x$ then they output $commit(x)$.
\end{itemize}

The properties are proved in the following: 


\item{\bf Property 1.} Assume, that a player $p_i$ outputs   $o_j=commit(0)$ (the case of $o_j=commit(1)$ is handled symmetrically). This means that $\party_i$ received $0$ in the second round from more than $2n/3$ parties, i.e., $|\psetp{i}{0}|>2n/3$. Since less than $n/3$ of these parties are corrupted in Round $2$, this means that $\party_i$ have received $0$ from more than $n/3$ of the parties that were uncorrupted in Round 2, who therefore also sent $0$ to every other $p_j$. Hence in Round 2 every $p_j$ has received $0$ more than $n/3$ times, i.e., $|\psetp{i}{0}|>n/3$. Additionally, since the message sent in Round 2 is the message that is set during Round 1 (i.e., the outcome of weak consensus) the output of parties from Round 1 must have been in $\{0,\nv\}$ and therefore no party who is uncorrupted in Round 2 might sent $1$; hence, since there are at most $n/3$ corrupted parties per round, $|\psetp{j}{1}|\leq n/3<|\psetp{j}{0}|$ and every party sets $z_j=1$.

\item{\bf Property 2.} If all non-actively corrupted players
  have input $0$ (the case of pre-agreement on $1$ can be handled
  symmetrically) then by persistency of \wc everyone sets $y_i=0$ in Round 1, hence in Round 2 every uncorrupted party sends $0$ and therefore every party $\party_i$ received $0$ from at least $2/3n$ times and therefore outputs  $commit(0)$ 
\end{proof}

\subsection{Authenticated Byzantine}
Again, it follows directly from Theorem~\ref{theorem:st_auth_byz} and the fact that a mobile adversary is at least as strong as a stationary that at most $t<n/2$ Authenticated Byzantine mobile corruptions can be tolerated for commit-adopt. In the following we describe a construction that  meets this bound.  

We introduce a task that we call  $t$-MAd-Authenticated-Byzantine-SM: Every processor $p_i$ has an input value $x(i)$ from some domain $V$ and outputs a vector $\vec{y}_i=(y(1)_i,\ldots,y(n)_i)$ such that each $y(j)_i\in V\cup\{\perp\}$ and the following conditions hold: 
\begin{itemize} 
\item There exists a set of indices $Ind\subseteq[n]$, with $|Ind|\geq n-t$ such that for each $j\in Ind$, $y(j)_i=x(j)$ for all $i\in[n]$
\item if $p_i$ and $p_j$ output $y(\ell)_{i}\neq\perp$ and $y(\ell)_{j}\neq\perp$, respectively, then $y(\ell)_{i}=y(\ell)_{j}$
\end{itemize}

\begin{protocol}{$\pa{t-MAdABSM}(\PS,\vec{x}=(x(1),\ldots,x(n)))$}
\item Every $p_i$ sends $x(i)$ to every $p_j$, who denotes the received value as $v(i)_{j}$ ($v(i)_{j}$ is set to a default value, e.g., $0$, if no value was received). Let $\vec{v}_j=(v(1)_{j},\ldots,v(n)_{j})$. 
\item Every $p_i$ sends $\vec{v}_i$ to every $p_j$. $p_j$ denotes the received vector by $\vec{v}_{i\rightarrow j}=(v(1)_{{i\rightarrow j}},\ldots,v(n)_{{i\rightarrow j}})$, where $\vec{v}_{i\rightarrow j}:=\perp^n$ if  $\vec{v}_{i\rightarrow j}\not\in V^n$  was received.
\item Every $p_j$ and every $\ell\in[n]$: if for some  $b\in V$ and some set $I_i\subseteq[n]$ with $|I_i|\geq n-t$,  $v(\ell)_{{i\rightarrow j}}=b$ for all $i\in I_i$ and $v(\ell)_{{i\rightarrow j}}=\perp$ for all $i\in[n]\setminus I_i$ 
then set $y(\ell)_{j}=b$ else set $y(\ell)_{j}=\perp$. Output $\vec{y}=(y(1)_{j},\ldots,y(n)_{j})$.
\end{protocol}

\cancel{
\begin{protocol}{$\pa{\wc}(\PS,\vec{x}=(x(1),\ldots,x(n)))$}
\item[0.] Let $\rho$ denote the round in which the protocol starts.
\item Each $p_i\in\PS$ sends $(\party_i,\rho,x_i)$ to every $p_j$; $p_j$ denotes the set of
  players who sent him $(\party_i,\rho,1)$ (resp. $(\party_i,\rho,0)$ or $\perp$) as $\psetp{1,j}{1}$ (resp.
  $\psetp{1,j}{0}$).
\item Each $p_i$:  For each $p_\ell\in\psetp{i}{0}$ (resp. $p_\ell\in\psetp{i}{1}$) send 
$(\party_i,\rho+1,(\party_\ell,\rho,0))$  (resp. $(\party_i,\rho+1,(\party_\ell,\rho,1)$) to every $p_j\in\PS$; $\party_j$ denotes by $\psetp{2,i\rightarrow j}{0}$ the set of parties $\party_\ell$ such that $(\party_i,\rho+1,(\party_\ell,\rho,0))$ was received from $\party_i$ in this round ($\psetp{2,i\rightarrow j}{1}$ is defined analogously);  $\party_j$ denotes the set of parties $\party_\ell$ for which both $(\cdot,\rho+1,(\party_\ell,\rho,0))$ and $(\cdot,\rho+1,(\party_\ell,\rho,1))$ was seen (in any of the two rounds) by $\psetp{j}{01}$. For each $b\in\{0,1\}$: $\party_j$ denotes by  $\psetp{j}{b}$ the set

$$\psetp{j}{b} = \{p_k \text{ s.t. }  |(\psetp{1,j}{b}\cap\psetp{2,k\rightarrow j}{b})\setminus\psetp{j}{01}|>n/2 \}$$


\item Each $p_j$ sets $y_j:=\left\{\begin{array}{cl} 0 &\text{ if } |\psetp{j}{0}|> n/2, \text{ else }\\[.5ex]
      1 &\text{ if }   |\psetp{j}{1}|> n/2, \text{ else }\\[.5ex]
      \nv
    \end{array}\right.$
\end{protocol}
}

\begin{theorem}\label{elithm1}
  Protocol $t-MAdABSM$ solves $t$-MAd-Authenticated-Byzantine-SM in the mobile $t$-MAd  Authenticated Byzantine adversary model for $t<n/2$. 
\end{theorem}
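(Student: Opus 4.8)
The plan is to verify the two defining conditions of the task separately---call them \emph{validity} (existence of the index set $Ind$) and \emph{cross-consistency}---while tracking carefully where the authentication assumption is genuinely needed and where a quorum-intersection count using $t<n/2$ already suffices. First I would fix notation: let $C_\rho\subseteq[n]$ be the set of processors the mobile adversary corrupts in round $\rho$, so $|C_\rho|\le t$. Since $t<n/2$ I get the two inequalities I lean on, $n-t>n/2$ and $n-2t\ge 1$. I also record three model facts: an uncorrupted processor's outgoing messages reach everyone untampered; an uncorrupted-in-round-$2$ forwarder $p_k$ sends the \emph{same} vector $\vec{v}_k$ to all receivers; and each processor computes its output honestly from its (untampered) incoming messages, even a processor whose outgoing messages the adversary is tampering with.

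For validity I would take $Ind:=[n]\setminus C_1$, of size $\ge n-t$. Fix $\ell\in Ind$. Since $p_\ell$ is uncorrupted in round $1$ it sends $x(\ell)$ to all, so $v(\ell)_i=x(\ell)$ for every receiver $p_i$. In round $2$ every forwarder $p_i\notin C_2$ relays this honestly, so $v(\ell)_{i\to j}=x(\ell)$ for at least $n-t$ values of $i$. The crucial step is the corrupted forwarders $p_i\in C_2$: here I invoke authentication. Because $p_\ell$ was honest in round $1$, the only message that any round-$2$ message may contain with metadata $(p_\ell,1,\cdot)$ is the genuine $(p_\ell,1,x(\ell))$, so a corrupted forwarder can only relay $x(\ell)$ or send a malformed vector, which makes $p_j$ read $\perp$ at position $\ell$. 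Hence for every $j$, position $\ell$ of every received vector is $x(\ell)$ or $\perp$, with at least $n-t$ copies equal to $x(\ell)$; the decision rule therefore sets $y(\ell)_j=x(\ell)$. This is exactly the step that uses authentication: without it a corrupted forwarder could fabricate a value $b'\ne x(\ell)$ for the honest source, forcing some $p_j$ to see a conflicting non-$\perp$ report and collapse to $\perp$.

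For cross-consistency I would argue by quorum intersection, and this step needs \emph{no} authentication. Suppose $p_i$ outputs $y(\ell)_i=b\ne\perp$ and $p_j$ outputs $y(\ell)_j=b'\ne\perp$. The rule applied at $p_j$ exhibits a set $I_j$ with $|I_j|\ge n-t$ of forwarders all reporting $b'$ to $p_j$ at position $\ell$. At most $t$ of them lie in $C_2$, so at least $|I_j|-t\ge n-2t\ge 1$ are uncorrupted in round $2$; pick such a forwarder $p_k$. Since $p_k$ is honest in round $2$ it sends one identical vector to everyone, so it reports $b'$ at position $\ell$ to $p_i$ as well, i.e. $v(\ell)_{k\to i}=b'\ne\perp$ (and $p_i$ receives it intact). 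But $p_i$ outputting $b\ne\perp$ means, by the rule, that every non-$\perp$ report it received at position $\ell$ equals $b$; applying this to $p_k$'s report yields $b'=b$.

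The main obstacle, and the part I would check most carefully, is the case where the source $p_\ell$ is itself corrupted in round $1$: there authentication offers no protection and a dishonest source may equivocate, sending different values to different receivers, so one cannot hope for agreement on its value. The resolution---and the reason the bound is $t<n/2$ rather than $t<n/3$---is that equivocation cannot be laundered into two conflicting non-$\perp$ outputs: any honest round-$2$ forwarder broadcasts a single value to all, so the moment one processor commits to a non-$\perp$ value $b$ at position $\ell$, every honest round-$2$ forwarder that populates another processor's quorum must already have reported $b$, pinning every non-$\perp$ output at that position to the same $b$ (equivocation can only drive outputs to $\perp$, which is consistent). I would also double-check the bookkeeping around the ``whole-vector-or-$\perp^n$'' rejection rule in round $2$: it means a corrupted forwarder's only way to suppress a single position is to garble its entire vector, and I would confirm this interferes with neither of the two counts above.
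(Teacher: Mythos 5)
Your proposal is correct and follows essentially the same route as the paper's (much terser) proof sketch: validity for sources uncorrupted in round~1 via the authentication constraint forcing corrupted forwarders to relay either the true value or $\perp$, and cross-consistency via the observation that any $(n-t)$-quorum contains a forwarder uncorrupted in round~2 whose identical broadcast pins down the unique non-$\perp$ candidate at that position. Your write-up is more explicit than the paper's about the whole-vector-or-$\perp^n$ rejection rule and about where $t<n/2$ versus authentication is used, but the underlying argument is the same.
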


\begin{protocol}{$\pa{\tt CA}(\PS,\vec{x}=(x(1),\ldots,x(n)))$}
\item Execute $t-MAdABSM(\PS,\vec{x})$; every processor $p_j$ denoted its output as $\vec{z}_j$. 
\item For each $p_j$: if there is a value $v\neq\perp$ such that majority of the elements in $\vec{z}_j$ equal $v$, then set $o(j):=propose.commit(v)$, else set $o(j):=propose.no.commit$. 
\item Execute $t-MAdABSM(\PS,\vec{o}=(o(1),\ldots,o(n)))$; every processor $p_j$ denotes its output as $\vec{d}_j=(d(1)_{j},\ldots,d(n)_{j})$. 
\item Every $p_j$: if for some bit $b$ the number of indices $\ell$ such that   $d(\ell)_{j}=commit(b)$ is more than $n/2$ then 
 output $commit(b)$; else if  for some bit $b'$:  $|commit(b')|>|commit(1-b')|>0$ then  $adopt(b')$; else $adopt(x(j))$.
%
%
\end{protocol}

\begin{theorem}\label{elithm2}
  Protocol $\tt CA$ solves binary CA in the mobile $t$-MAd  Authenticated Byzantine adversary model for $t<n/2$. 
\end{theorem}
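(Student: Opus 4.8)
The plan is to derive both defining properties of CA from the two guarantees that Theorem~\ref{elithm1} provides for each invocation of the shared memory: \emph{validity}, i.e., there is a common coordinate set $Ind$ with $|Ind|\geq n-t$ on which every processor reads the true input of the corresponding party, and \emph{consistency}, i.e., on every coordinate $\ell$ all non-$\perp$ reads coincide. Since $t<n/2$ we have $n-t>n/2$, so in each invocation the number of \emph{tampered} coordinates (those outside $Ind$) is at most $t<n/2$; I will use this repeatedly. Persistency (Property~2) is the easy half: if every $p_i$ starts with the same $x$, validity of the first shared memory gives each $p_j$ more than $n/2$ coordinates equal to $x$ in $\vec z_j$, so every processor proposes $commit(x)$ in Step~2; validity of the second shared memory then gives each $p_j$ more than $n/2$ coordinates equal to $commit(x)$ in $\vec d_j$, so every processor outputs $commit(x)$ in Step~4.

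For the consistency property (Property~1) I would first pin down the only bit that can ever be committed. Call $b^\ast$ the unique bit any processor proposes in Step~2: if $p_j$ proposed $commit(0)$ and $p_{j'}$ proposed $commit(1)$, each would hold a strict majority of its first-memory coordinates equal to its bit, these two majorities would share a coordinate $\ell$, and that coordinate would then carry two distinct non-$\perp$ values across processors, contradicting first-memory consistency. Next I would show that whenever some $p_i$ outputs $commit(b)$ we must have $b=b^\ast$: the more-than-$n/2$ coordinates $\ell$ with $d(\ell)_i=commit(b)$ cannot all be tampered (there are at most $t<n/2$ of those), so at least one lies in the validity set $Ind_2$ of the second memory and hence equals the true proposal $o(\ell)$, which can only be $commit(b^\ast)$ or $no.commit$; thus $b=b^\ast$.

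The heart of the argument, and the step I expect to be the main obstacle, is ruling out a conflicting output at any other processor $p_j$ once $p_i$ has committed $b^\ast$. Let $B=\{\ell\in Ind_2 : o(\ell)=commit(b^\ast)\}$ be the genuinely proposing coordinates of the second memory; by validity every processor, in particular $p_j$, reads $commit(b^\ast)$ on all of $B$, so $|commit(b^\ast)|_j\geq |B|$, while every coordinate read as $commit(1-b^\ast)$ by $p_j$ must be tampered (coordinates of $Ind_2$ never carry $commit(1-b^\ast)$). The subtle point is that the tampered coordinates form a single budget of size at most $t<n/2$, and second-memory consistency forbids a coordinate from being read as $commit(b^\ast)$ by $p_i$ and as $commit(1-b^\ast)$ by $p_j$. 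Writing $f$ for the tampered coordinates $p_i$ reads as $commit(b^\ast)$ and $g:=|commit(1-b^\ast)|_j$, these sets are disjoint, both avoid $B$, and together fit inside the at-most-$t$ tampered coordinates, so $f+g\leq t$. Since $|commit(b^\ast)|_i=|B|+f>n/2$ we get $|B|>n/2-f$, whence $g\leq t-f<n/2-f<|B|\leq |commit(b^\ast)|_j$. Thus $p_j$ reads strictly more $commit(b^\ast)$ than $commit(1-b^\ast)$, so it cannot commit $1-b^\ast$ and, when it does not cross the $n/2$ threshold for $b^\ast$, it falls into the adopt branch that selects the majority-commit bit, namely $b^\ast$.

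The remaining bookkeeping is routine: confirming the adopt rule resolves to $b^\ast$ rather than to the input-based fallback $adopt(x(j))$ (which requires reading the Step-4 tie-break as ``adopt the bit with strictly more commit-votes whenever at least one commit-vote is present,'' and then invoking the strict inequality just established to exclude both a tie and the empty case), and checking that all reported values are legitimate inputs so that outputs are well-formed. The genuine difficulty lies in the disjointness/budget argument above: it is exactly where the $t<n/2$ threshold is consumed, and where the Authenticated-Byzantine guarantee of the shared memory from Theorem~\ref{elithm1}—that per invocation at most $t$ coordinates can be made non-$\perp$ and inconsistent with the true proposals—does the real work.
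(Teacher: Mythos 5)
Your proof is correct and follows essentially the same route as the paper's: persistency from validity of the two shared-memory invocations, uniqueness of the proposable commit bit from intersecting majorities plus per-coordinate consistency, and a shared corruption-budget accounting (your $f+g\leq t$ versus the paper's $q$ and $t-q$) showing any other processor sees strictly more $commit(b^\ast)$ than $commit(1-b^\ast)$ votes. Your version is in fact slightly more careful than the paper's sketch, in pinning down that a committed bit must equal the unique proposable bit and in flagging that the Step-4 fallback condition $|commit(b')|>|commit(1-b')|>0$ must be read charitably when the opposite count is zero.
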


\begin{proof}[Proof of Theorem~\ref{elithm1} (sketch)]
Consider processor $p_i$ which is not corrupted in line 1 (i.e. the first round).
All processor $p_j$ will have $v(i)_j= x(i)$; consequently since minority is corrupted in the second round (line 2), then majority will send $x(i)$ in the second round. On the other hand the processors corrupted in the second round cannot forge $x(i)$, hence their value for $p_i$ will be $x(i)$ or  $\bot$. 

Now let $p_i$ be a processor corrupted in the first round. To output a value different than
$\bot$ (line 3)  processor $p_j$ received same value majority $v(i)_k$ from indices $k$ whose cardinality is at least a majority. At most minority was corrupted out of this majority, thus this same value  $v(i)_k$ is the only candidate value (not $\bot$ ) to be output by any processor for the value $y(i)_j$ which is the same for all $j$.

\end{proof}

\begin{proof}[Proof of Theorem~\ref{elithm2} (sketch)]
First we show that if all start with the same value they all commit to this value. By the definition of $t$-MAd-Authenticated-Byzantine-SM, all processors will have a set $I$ with $|I|>n/2$ which return an input value which in this case they are all the same. Consequently, all processors will propose to commit this value $v$. Hence, by the property of 
$t$-MAd-Authenticated-Byzantine-SM which is used the second time, the will all output $commit(v)$ for this value. 

To complete the proof, we argue that if some processor $p_i$ outputs $commit(v)$ then everyone outputs $adopt(v)$. Notice that only a single $propose-commit(v)$ can be output in Line 2. This is because a propose commit $v$ by $p_i$ requires that in a majority of indices in its output it has value $v$. By the property of $t$-MAd-Authenticated-Byzantine-SM, no other processor output conflicts with a different value on these majority indices, then
majority for another value for any other processor is impossible.

Assume now $p_i$ committed $v$. Then it returned from line 3 of its output vector with 
majority $propose.commit(v)$. Let this majority value be $m$. Since processors have no conflict on their output entries returning from $t$-MAd-Authenticated-Byzantine-SM, it is easy to see that if another processor $p_j$ has a $\bot$
in an entry in which $p_i$ has $propose.commit(v)$, this is a result of $p_j$ having been corrupted in the first round of the $t$-MAd-Authenticated-Byzantine-SM. Thus, for each index
in which $p_i$ has $propose.commit(v)$ for $p_j$ to miss it, the adversary has to spend a corruption of the first round. Suppose $p_j$ has $q$ indices in which $p_i$ output $propose.commit(v)$, and it output $\bot$ thus it has another $t-q $ indices it can corrupt to have $propose.commit(\bar{v})$. Thus, the number of $propose.commit(\bar{v})$ is $t-q$
which the number of $propose.commit(v)$ it has is $m-q$. Since $m>t$ it will adopt $v$.

\end{proof}

\cancel{
\begin{theorem}\label{lemma:perf_ma.wc}
  The protocol
  \pa{\wc} satisfies the following properties against an eventually-static mobile MAD $t$-adversary in the authenticated setting with $t<n/2$: (weak consistency) There exists
  some $y\in\{0,1\}$ such that every $p_j\in \PS$ sets
  \mbox{$y_j\in\{y,\nv\}$.}  (persistency) If every $p_i\in \PS$ has the same input $x$ then they all set $y_j:=y=x$. (termination) All parties set their $y_i$ value after two rounds
\end{theorem}

\begin{proof}
\item (termination) Termination is trivial since all parties set the value of $y_i$ at the end of their second round. 
 \item (weak consistency) Assume, that a player $\party_i$ sets  $y_i=0$. This means that $|\psetp{i}{0}|\geq n/2$. Hence there is at least one party in $\party_{\bar{i}}\in\psetp{i}{0}$ who is honest in the second round, and therefore sent $(\party_{\bar{i}},\rho+1,(\party_\ell,\rho,0))$ for every  $\party_\ell\in\psetp{2,\bar{i}\rightarrow i}{0}$ to every party. Assume now, towards contradiction, that some $\party_j$ outputs $y_j=1$. With a similar argument as above, there exists a $\party_{\bar{j}}\in\psetp{i}{1}$ and is is honest, therefore, in the second round and therefore sent $(\party_{\bar{j}},\rho+1,(\party_\ell,\rho,1))$ for every  $\party_\ell\in\psetp{2,\bar{j}\rightarrow j}{1}$ to every party. 

Let $\psetp{1,{\bar{i}\bar{j}}}{01}$ denote the set of parties that in the first round (Round~$\rho$) sent inconsistent messages to $\party_{\bar{i}}$ and $\party_{\bar{j}}$,  i.e., $\psetp{1,{\bar{i}\bar{j}}}{01}$ is the set of parties $\party_\ell$ that sent $(\party_\ell,\rho,b)$ to $\party_{\bar{i}}$ and $(\party_\ell,\rho,1-b)$ to $\party_{\bar{j}}$ for some $b\in\{0,1\}$. This means that for some $b\in\{0,1\}$ every party $p\in\PS\setminus \psetp{1,{\bar{i}\bar{j}}}{01}$ sent $(p,\rho,b)$ to both $\party_{\bar{i}}$ and $\party_{\bar{j}}$ in the first round. Wlog assume $b=0$ (the case where $b=1$ is handled symmetrically). 
 Since both  $\party_{\bar{i}}$ and $\party_{\bar{j}}$ are honest in the second round: 
 
 \begin{equation}\label{eq:1}
 \psetp{1,{\bar{i}\bar{j}}}{01}\subseteq \psetp{i}{01}\cap\psetp{j}{01}
 \end{equation}
 
 Furthermore, since $\party_{\bar{i}}$ received $(\party,\rho,0)$ from every party in $\party\in\PS\setminus \psetp{1,{\bar{i}\bar{j}}}{01}$ we have
 
  \begin{equation}\label{eq:2}
 \PS\setminus \psetp{1,{\bar{i}\bar{j}}}{01}=  \psetp{1,{\bar{i}}}{0}\setminus \psetp{1,{\bar{i}\bar{j}}}{01}
  \end{equation}
 
  Finally, since $\party_{\bar{i}}$ is honest in the second round we have
  
   \begin{equation}\label{eq:3}
  \psetp{1,{\bar{i}}}{0} = \psetp{2,\bar{i}\rightarrow i}{0}
  \end{equation}
   
 $$|\PS\setminus \psetp{1,{\bar{i}\bar{j}}}{01}| \overset{eq.~\ref{eq:2}}{=} | \psetp{1,{\bar{i}}}{0}\setminus \psetp{1,{\bar{i}\bar{j}}}{01}| \overset{eq.~\ref{eq:3}}{=} |\psetp{2,\bar{i}\rightarrow i}{0}\setminus \psetp{1,{\bar{i}\bar{j}}}{01}|\overset{eq.~\ref{eq:1}}{>}
 |\psetp{2,\bar{i}\rightarrow i}{0}\setminus\psetp{i}{01}|>n/2$$

But since as discussed above, $\party_{\bar{j}}$ has also received $(\cdot,\rho,0)$ from all parties in $\PS\setminus \psetp{1,{\bar{i}\bar{j}}}{01}$ he must have received  $(\cdot,\rho,1)$ from less than $n/2$ parties, so he cannot be in $\psetp{j}{1}$; indeed, if $\party_{\bar{j}}\in\psetp{j}{1}$ then by definition it means that $\party_{\bar{j}}$ sent more than $n/2$ messages of the type $(\party_{\bar{j}},\rho+1,(\cdot,\rho,1))$ which contradicts the fact that $\party_{\bar{j}}$ is 
honest in the second round.



%
%
%
\item (persistency) Assume that all parties have input $0$ (the case of input $1$ is symmetric). Let $H_1$ denote the set of parties that are uncorrupted in the first round. Then in the first round, every party $\party_i$ receives $(\cdot,\rho,0)$ from more than $n/2$ parties (all the uncorrupted) which means that 
 $|H_1\cap\psetp{1,i}{0}|>n/2$. But then every party $\party'$ which is uncorrupted in the second round will send $(\party',\rho+1,(p,1,0))$ for every $p\in H_1\cap\psetp{1,i}{0}$. Furthermore, since $H_1\cap\psetp{1,i}{0}$ were honest in the first round, no party will be included in and $\psetp{j}{01}$, for any $\party_j\in\PS$, and therefore all parties that are honest in the second round will be included in every $\party_j$'s $\psetp{j}{0}$ set; since there are more than $n/2$ such parties, every $\party_j$ will output $0$. 
 \end{proof}

\begin{protocol}{$\pa{\gc}(\PS,\vec{x}=(x_1,\ldots,x_n))$}
\item[0.] Let $\rho$ denote the round in which the protocol starts.
\item The players invoke $\pa{\wc}(\PS,\vec{x})$ and let $y_i$ denote the value set by $p_i$ (note that $y_i\in\{0,1,\nv\}$).
\item (Round $\rho+2$): Each $p_i\in\PS$ sends $(\party_i,\rho+2,y_i)$ to every $p_j$; $p_j$ denotes the set of
  players who sent him $(\party_i,\rho+2,1)$ as $\psetp{1,j}{1}$, the set of
  players who sent him $(\party_i,\rho+1,\nv)$ or $\perp$ as $\psetp{1,j}{\nv}$, and the set of parties that sent $(\party_i,1,0)$ or as $\psetp{1,j}{0}$.

  \item Each $\party_i$:  For each $p_\ell\in\psetp{i}{0}$ (resp. $p_\ell\in\psetp{i}{1}$) send 
$(\party_i,\rho+3,(\party_\ell,\rho+2,0))$  (resp. $(\party_i,\rho+3,(\party_\ell,\rho+2,1)$) to every $p_j\in\PS$; $\party_j$ denotes by $\psetp{2,i\rightarrow j}{0}$ the set of parties $\party_\ell$ such that $(\party_i,\rho+3,(\party_\ell,\rho+2,0))$ was received from $\party_i$ in this round ($\psetp{2,i\rightarrow j}{1}$ and $\psetp{2,i\rightarrow j}{\nv}$ are defined analogously);  $\party_j$ denotes the set of parties $\party_\ell$ for which both $(\cdot,\rho+3,(\party_\ell,\rho+2,0))$ and $(\cdot,\rho+3,(\party_\ell,\rho+2,1))$ was seen (in any of the two rounds) by $\psetp{j}{01}$. For each $b\in\{0,1\}$: $\party_j$ denotes by  $\psetp{j}{b}$ the set

$$\psetp{j}{b} = \{p_k \text{ s.t. }  |(\psetp{1,j}{b}\cap\psetp{2,k\rightarrow j}{b})\setminus\psetp{j}{01}|>n/2 \}$$
  
 \item Each $\party_j$ sets 
 
$b_j:=\left\{\begin{array}{cl} 1 &\text{ if } |\psetp{1,j}{1}\setminus\psetp{j}{01} |> |\psetp{1,j}{0}\setminus\psetp{j}{01} |, \text{ else }\\[.5ex]
0
    \end{array}\right.$


\item Each $p_j$ outputs $o_j:=\left\{\begin{array}{cl} commit(b_j) &\text{ if } |\psetp{j}{b}|> n/2, \text{ else }\\[.5ex]
adopt(b_j)
    \end{array}\right.$
\end{protocol}

\begin{theorem}\label{lemma:perf_ma.wc}
  The protocol
  \pa{\gc} solves commit-adopt against an eventually-static mobile MAD $t$-adversary with $t<n/3$. 
\end{theorem}

\begin{proof}
We need to prove the following properties: 
  \begin{itemize}
      \item {\em (Property 1)}  If for some $z\in\{0,1\}$ some processor $\party_i\in\PS$ outputs $commit(b)$  then every processor  $\party_j\in \PS$ outputs $o_j\in\{commit(b),adopt(b)\}$. 
      \item {\em (Property 2)}  If every $p_i\in \PS$ has the same input $x$ then they output $commit(x)$.
\end{itemize}

\item{\bf Property 1.}  

Assume wlog that some party $\party_i$ outputs $o_i=commit(0)$ (the case $o_i=commit(1)$ is symmetric).  This means that there are more than $n/2$ parties $\party_k$ for which $|\psetp{2,k\rightarrow j}{0}|>n/2$. But then one of these parties, say $\party_{k*}$ must be honest in the last round (i.e., Round $\rho_1$) and therefore $\psetp{2,k\rightarrow j}{0}$ includes the actual bit $0$ that he received in the previous round, i.e., Round $\rho+2$. Hence,  $\party_{k*}$ must have received $(\party,\rho_2,0)$ from at least one honest party $\party_j$, which means that the value $y_i$ that $\party_j$ set in the end of $\wc$ must be $y_j=0$. But in that case, Weak Consistency (Lemma~\ref{lemma:perf_ma.wc}) ensures that all parties $\party_\ell$ set $y_j\in\{0,\nv\}$. We next note that since $\party_{k*}$ is uncorrupted in Round $\rho+3$ : 

$$
|\psetp{2,k*\rightarrow \ell}{0}|>n/2
$$

for every $\party_{\ell}\in\PS$. Assume towards contradiction that some $\party_\ell$ outputs $o_\ell\in\{commit(1),adopt(1)\}$. Clearly, all parties in $\psetp{1}{1,\ell}$ must be corrupted in Round $\rho+2$. 

 Next we note that since $\party_{k*}$ is uncorrepted in Round $\rho+3$, $\psetp{2,k*\rightarrow \ell}{0}$ includes all messages received in round $\rho+2$. Furthermore, again since $\party_{k*}$ is uncorrupted in Round $\rho+3$ and all uncorrupted parties enter with input  $0$ or $\nv$ all parties in $(\psetp{1,\ell}{1}\cup\psetp{1,\ell}{\nv}\cup_{q=1}^{n}(\psetp{2,q\rightarrow \ell}{1}\cup\psetp{2,q\rightarrow \ell}{\nv})$ mush have be corrupted in Round $\rho+2$. 

%

 Now we observe that: 
  
 
$$\psetp{2,k^*\rightarrow\ell}{0}= \underbrace{\left(\psetp{2,k^*\rightarrow\ell}{0}\cap\left(\psetp{1,\ell}{0}\cap_{q=1}^{n}\psetp{2,q\rightarrow \ell}{0}\right)\right)}_{\psetp{1,\ell}{0}\setminus\psetp{01}{\ell}}\bigcup\underbrace{\left(
\psetp{2,k^*\rightarrow\ell}{0}\bigcap\left(\psetp{1,\ell}{1}\cup\psetp{1,\ell}{\nv}\cup_{q=1}^{n}(\psetp{2,q\rightarrow \ell}{1}\cup\psetp{2,q\rightarrow \ell}{\nv}\right)\right)}_{T_2}$$

Note that $T_2\subseteq\psetp{\ell}{01}$. Let $|T_2|=t_2$. By inspection of the protocol the above have:

 $$|\psetp{1,\ell}{0}\setminus\psetp{\ell}{01}|=|\psetp{2,k^*\rightarrow\ell}{0}| - |T_2|>n/2-t_2$$
 
 Since all the parties in $\psetp{1}{\ell}$ must be corrupted in round $\rho+2$ (from the weak consistency of \wc): 
 
  $$|\psetp{1,\ell}{1}\setminus\psetp{\ell}{01}|\leq t-t_2$$, where $t$ denotes  the number of corrupted parties in round $\rho+2$. From the above we have:  
  
  $$|\psetp{1,\ell}{0}\setminus\psetp{\ell}{01}|>|\psetp{1,\ell}{1}\setminus\psetp{\ell}{01}|
  $$
  
  so $\party_\ell$ cannot have output  $o_j\in\{commi(1),adopt(1)\}$. 

\item{\bf Property 2.} If every party has input $b$, then by Lemma~\ref{lemma:perf_ma.wc}, all parties output $b$ at the end of the second round. The fact that every party $\party_j$ will then output $commit(b)$ follows directly from the analysis of the persistency from~\ref{lemma:perf_ma.wc} since the exact same condition is used. 
\end{proof}
}

\section{Conclusions}

We have examined the Stationary/Mobile replacing Synchronous/Asynchronous and shown that the former when considered in the MAd adversary model is indulgent for common models unlike
its counterpart.

We contend that there is a much richer clean distributed Theory when we consider MAd adversary in the context of Stationary/Mobile as conjectured in the paper. Showing that our case by case analysis was superfluous is a beautiful challenge. Same goes for changing binary CA to multi-valued CA in the mobile benign omission case.

Is beautiful theory with no current application worth developing? It is called Basic Research (BR) and we still root for BR.

In~\cite{DBLP:journals/corr/DolevG16}, Dolev and Gafni analyse mixtures of Stationary and mobiles faults. It is interesting the re-examine \cite{} in light of this paper.

Finally, it will be nice to identify other natural ``pairs'' for which indulgence can be defined.

\bibliographystyle{plain}
\bibliography{biblio}

\end{document}